\tikzset{
    every picture/.style={>=stealth,auto,node distance=0.5cm,}
}
\tikzstyle{every state}=[
\title{Sem{\"e}nov Arithmetic, Affine {VASS}, and String Constraints}
\titlerunning{Sem{\"e}nov Arithmetic, Affine {VASS}, and String Constraints}
\author{Andrei Draghici}{Department of Computer Science, University of Oxford, Oxford, United Kingdom}{andrei.draghici@cs.ox.ac.uk}{}{}
\author{Christoph Haase}{Department of Computer Science, University of Oxford, Oxford, United Kingdom}{christoph.haase@cs.ox.ac.uk}{}{}
\author{Florin Manea}{Computer Science Department and Campus-Institut Data Science, G{\"o}ttingen University, Germany}{florin.manea@informatik.uni-goettingen.de}{}{}
\authorrunning{A.\ Draghici, C.\ Haase, and F.\ Manea} 
\keywords{arithmetic theories, B\"uchi arithmetic, exponentiation, vector addition systems with states, string constraints} 
\begin{document}

\DeclarePairedDelimiter{\ceil}{\lceil}{\rceil}
\DeclarePairedDelimiter\floor{\lfloor}{\rfloor}



\hyphenation{EXP-SPACE}

\maketitle

\begin{abstract}
  We study extensions of Sem{\"e}nov arithmetic, the first-order
  theory of the structure $\langle \N,+,2^x\rangle$. It is well-known
  that this theory becomes undecidable when extended with regular
  predicates over tuples of number strings, such as the B\"uchi
  $V_2$-predicate. We therefore restrict ourselves to the existential
  theory of Sem{\"e}nov arithmetic and show that this theory is
  decidable in EXPSPACE when extended with arbitrary regular
  predicates over tuples of number strings. Our approach relies on a
  reduction to the language emptiness problem for a restricted class
  of affine vector addition systems with states, which we show
  decidable in EXPSPACE. As an application of our result, we settle an
  open problem from the literature and show decidability of a class of
  string constraints involving length constraints.
\end{abstract}

\section{Introduction}

This paper studies the decidability and complexity of the existential
theory of an extension of the structure
$\langle \N,0,1,+,2^x \rangle$, where $2^x$ is the function mapping a
natural number $n$ to $2^n$.  Decidability of the first-order theory of
this structure was first shown by Sem{\"e}nov in a more general
framework using an automata-theoretic approach~\cite{Sem80}, and we
henceforth call this theory
\emph{Sem{\"e}nov arithmetic}. As shown by Cherlin and Point~\cite{CP86},
see also~\cite{Point10}, Sem{\"e}nov arithmetic admits quantifier
elimination and has a quantifier-elimination procedure that runs in
non-elementary time, and this upper bound is tight~\cite{Point10}. The
existential fragment of Sem{\"e}nov arithmetic has recently been shown
decidable in NEXP~\cite{BCM23} by giving a more elaborate quantifier
elimination procedure. Unlike its substructure Presburger arithmetic
(obtained from dropping the function $2^x$), Sem{\"e}nov arithmetic is
not automatic in the sense of the theory of automatic
structures~\cite{KN95,BG00}. The constant growth lemma~\cite{KM07}
states that for any function $f$ of an automatic structure,
$\abs{f(x_1,\ldots,x_n)} \le \abs{x_1} + \cdots + \abs{x_n} + c$ for
some constant $c$. This is clearly not the case for the function
$2^x$.

The decidability of Sem{\"e}nov arithmetic is fragile with respect to
extensions of the structure. For instance, it is not difficult to see
that extending Sem{\"e}nov arithmetic with the B\"uchi predicate
$V_2(x,y)$, where $V_2(x,y)$ holds whenever $x$ is the largest power
of two dividing $y$ without remainder, results in an undecidable
first-order theory, see e.g.~\cite{Point10}. However, this
undecidability result requires an $\exists^*\forall^*$-quantifier
prefix and does not rule out decidability of the existential
fragment. The main result of this paper is to show that the
existential theory of \emph{generalised Sem{\"e}nov arithmetic}, i.e.,
the existential theory of $\langle
\N,0,1,+,2^x,\{R_i\}_{i\ge 0} \rangle$, where $R_0,R_1,\ldots$ is an
enumeration of all regular languages over the alphabets $\{0,1\}^d$,
$d\ge 1$, is decidable in EXPSPACE. Non-automaticity of Sem{\"e}nov
arithmetic and undecidability of $\langle \N,0,1,+,2^x,V_2 \rangle$
rule out the possibility of approaching this existential theory via
automatic structures based on finite-state automata or via
quantifier-elimination \emph{\`a la} Cherlin and Point, since $V_2$ is
definable as a regular language over pairs of number strings.
Instead, our decidability result is based on a reduction to the
language emptiness problem of a special class of \emph{affine vector
addition systems with states (affine VASS)}.

A VASS comprises a finite-state controller with a finite number of
counters ranging over the natural numbers. In an affine VASS, when
taking a transition, every counter can be updated by applying an
affine function $x \mapsto a x + b$ to the current value,
provided that the resulting counter is non-negative. While
reachability in affine VASS is decidable for a single
counter~\cite{FGH13}, already in the presence of two counters
reachability becomes undecidable~\cite{Reichert2015}. Our reduction
consequently requires a restricted class of affine VASS to obtain
decidability. We call this class
\emph{restricted labelled affine VASS (restricted \laVASS)}. A
restricted \laVASS is an affine VASS with $d$ pairs of counters and
hence $2d$ counters in total. For every pair, the first counter does
not change until it keeps getting incremented at every transition; the
second counter is only updated via affine functions $x\mapsto 2
x$ and $x\mapsto 2 x+1$. A configuration consisting of a control
state and $2d$ counter values is accepting whenever the control state
is accepting and for every pair of counters, the first counter has the
same value as the second counter. We give an EXPSPACE procedure for
deciding emptiness of restricted \laVASS whose correctness proof is
based on a kind of counter elimination procedure in which we
successively encode counters into a finite state space while
preserving equi-non-emptiness. The tight syntactical restrictions
on \laVASS are necessary in order to obtain a decidable class of
affine VASS---relaxing those restrictions even only slightly leads to
undecidability of the language emptiness problem as we will later
discuss.

The EXPSPACE upper bound for existential generalised Sem{\"e}nov
arithmetic follows from a reduction to language non-emptiness of a
restricted \laVASS whose language encodes all solutions of the given
formula. Obtaining an elementary upper bound is difficult since it is
easily seen that smallest solutions of an existential formula of
Sem{\"e}nov arithmetic can be non-elementary in bit-length.

As an application of our EXPSPACE upper bound for existential
generalised Sem{\"e}nov arithmetic, we show that a certain class of
string constraints with length constraints is decidable in
EXPSPACE. It allows existentially quantifying over bit-strings, and to
assert that the value of a string variable lies in a regular language,
as well as Presburger-definable constraints over the lengths of the
bit-strings stored in string variables and the numerical values of
those variables (when viewed as encoding a number in binary).
Decidability of this class was left open in~\cite{BDGKMMN23}. We
settle this open problem by showing that it can be reduced to the
existential fragment of generalised Sem{\"e}nov arithmetic. Formulas
of this class of string constraints appear widely in practice---in
fact, essentially all formulas in the extensive collection of standard
real-world benchmark sets featured in~\cite{BDGKMMN23,BKMMDNG20} lie
in this class.

\section{Preliminaries}


\subsection{Basic notation}
By $\Z$ and $\N$ we denote the integers and non-negative integers,
respectively. Given an $m\times n$ integer matrix $A$, we denote by
$\norm{A}_{1,\infty}$ the $(1,\infty)$-norm of $A$, which is the
maximum over the sum of the absolute values of the coefficients of the
rows in $A$. For $\vec b\in \Z^m$, $\norm{\vec b}_\infty$ is the
largest absolute value of the numbers occurring in $\vec b$.

\subsection{Numbers as strings and strings as numbers}
Here and below, let $\Sigma=\{0,1\}$ be a binary alphabet. Any string
from $\Sigma^*$ has a natural interpretation as a binary encoding of a
natural number, possibly with an arbitrary number of leading
zeros. Conversely, any natural number in $\N$ can be converted into
its bit representation as a string in $\Sigma^*$. Finally, by
considering strings over $(\Sigma^k)^*$ for $k\ge 1$, we can represent
$k$-tuples of natural numbers as strings over $\Sigma^k$, and
\emph{vice versa}.

Formally, given $u=\vec u_n \vec u_{n-1}\dots \vec u_0 \in
(\Sigma^k)^*$, we define the tuple of natural numbers corresponding to
$u$ in \emph{most-significant digit first (msd)} notation as
\[
\eval{u} \defeq \sum_{i=0}^n 2^i \cdot \vec u_i\,.
\]
Note that $\eval{\cdot}$ is surjective but not injective.
We lift the definition of $\eval{\cdot}$
to sets in the natural way.

\subsection{Generalised Sem{\"e}nov arithmetic}
For technical convenience, the structures we consider in this paper
are relational. We refer to the first-order theory of $\langle
\N,0,1,+,2^{(\cdot)} \rangle$ as \emph{Sem{\"e}nov arithmetic}, where
$+$ is the natural ternary addition relation, and $2^{(\cdot)}$ is the
power relation of base two, consisting of all tuples $(a,b) \in \N^2$
such that $b=2^a$. Sem{\"e}nov arithmetic is an extension of
Presburger arithmetic, which is the first-order theory of the
structure $\langle \N,0,1,+ \rangle$. It is known that Sem{\"e}nov
arithmetic is decidable and admits quantifier
elimination~\cite{Sem80,CP86,BCM23}.

For presentational convenience, atomic formulas of Sem{\"e}nov
arithmetic are one of the following:
\begin{itemize}
\item linear equations of the form $a_1 \cdot x_1 + \cdots a_d \cdot
  x_d = b$, $a_i,b \in \Z$, and
\item exponential equations of the form $x=2^y$.
\end{itemize}
Here, $x_1,\ldots,x_d,y$ are arbitrary first-order variables. Clearly,
richer atomic formulas such as $x + 2^{2^y} + y = z + 5$ can be
defined from those basic class of atomic formulas, since, in this
example, $x + 2^{2^y} + y = z + 5 \equiv \exists u\exists v\, u=2^v
\wedge v = 2^y \wedge x+u+y-z=5$. Moreover, since we are interpreting
numbers over non-negative integers, we can define the order relation
in existential Sem{\"e}nov arithmetic. This enables us to without loss
of generality assume that existential formulas of Sem{\"e}nov
arithmetic are positive, since $\neg (x=y) \equiv x<y \vee y<x$ and
$\neg(x=2^y) \equiv \exists z\, z=2^y \wedge \neg( x=z)$. 

The main contribution of this paper is to show that the existential
fragment of a generalisation of Sem{\"e}nov arithmetic is decidable.
Subsequently, we write $\vec 0$ to denote a tuple of 0s in any
arbitrary but fixed dimension. \emph{Generalised Sem{\"e}nov
  arithmetic} additionally allows for non-negated atomic formulas
$R(x_1,\ldots,x_k)$, where $R=\vec 0^*\cdot L$ for some regular
language $L\subseteq (\Sigma^k)^*$. We interpret $R$ as $\eval R
\subseteq \N^k$, and the additional leading zeros we require ensure
that $R=\eval{\eval{R}}^{-1}$. Subsequently, we call a language
$L\subseteq (\Sigma^k)^*$ \emph{zero closed} if $L = \vec 0^* \cdot
L$. Given a formula $\Phi(x_1,\ldots,x_n)$ of generalised Sem{\"e}nov
arithmetic, we define $\eval \Phi \subseteq \N^d$ as the set of all
satisfying assignments of $\Phi$.

The size of an atomic formula $R(x_1,\ldots,x_k)$ is defined as the
number of states of the canonical minimal DFA defining $R$. For all
other atomic formulas $\varphi$, we define their sizes $\abs \varphi$
as the number of symbols required to write down $\varphi$, assuming
binary encoding of numbers. The size $\abs \Phi$ of an arbitrary
existential formula $\Phi$ of generalised Sem{\"e}nov arithmetic is
the sum of the sizes of all atomic formulas of $\Phi$.

The full first-order theory of generalised Sem{\"e}nov arithmetic is
known to be undecidable~\cite{Point10}. This follows from the
undecidability of $\langle \N,0,1,+,2^{(\cdot)}, V_2 \rangle$, where
$V_2$ is the binary predicate such that $V_2(x,y)$ holds if and only
if $x$ is the largest power of $2$ dividing $y$ without
remainder. Note that $V_2$ can be defined in terms of a regular
language, cf.~\cite{BHMV94}. The central result of this paper is the
following:
\begin{theorem}\label{thm:main}
  The existential fragment of generalised Sem{\"e}nov arithmetic is
  decidable in EXPSPACE.
\end{theorem}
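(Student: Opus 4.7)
The plan is to prove Theorem~\ref{thm:main} by reducing the existential fragment of generalised Sem{\"e}nov arithmetic to the language emptiness problem for restricted \laVASS, and invoking the EXPSPACE upper bound for the latter (established separately in the paper) to conclude.

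First, I would normalise an input formula $\Phi$ using the rewrites already noted in the preliminaries (positivity, unrolling nested exponentials via fresh existentials, splitting negations) to obtain a disjunction of conjunctions of atoms of three kinds: linear equations over $\Z$, exponential atoms $x = 2^y$, and zero-closed regular-predicate atoms $R(\vec x)$. Linear equations are Presburger-definable, and so they define zero-closed regular predicates on MSD-first bit tuples (by B\"uchi's theorem for Presburger arithmetic in the usual LSD-first encoding, composed with reversal). I therefore fold all linear atoms into the regular-predicate layer. Non-determinism in the \laVASS handles disjunction for free, so it suffices to reduce a single conjunction of exponential atoms and regular-predicate atoms to \laVASS language non-emptiness.

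Second, I would construct $\mathcal{V}_\Phi$ over the alphabet $\Sigma^d$, where $d$ is the number of free variables of $\Phi$. A string of length $L$ encodes an MSD-first binary tuple $(u_1, \ldots, u_d)$ representing an assignment. Regular-predicate atoms are absorbed into the control state via the synchronous product with their DFAs. Each exponential atom $x = 2^y$ is then handled by a fresh counter pair $(c_1, c_2)$: the counter $c_2$ reads the $y$-column of the input via the updates $c_2 \mapsto 2 c_2$ and $c_2 \mapsto 2 c_2 + 1$, ending at $c_2 = \eval{u_y}$; the counter $c_1$ remains stable until the position of the (non-deterministically guessed) leading $1$ of the $x$-column, and then increments at every subsequent transition, ending at the number of positions following that leading $1$. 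A further regular constraint forces the $x$-column to be of the form $0^* 1 0^*$, so the acceptance condition $c_1 = c_2$ pins down $x = 2^y$ exactly. Sharing of variables across multiple exponential atoms is automatic: fresh counter pairs simply read the same input column.

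The main obstacle I anticipate is reconciling the stringent syntactic restrictions on \laVASS transitions (each pair's first counter has a single "stable then $+1$" regime, and each second counter uses only the two affine doubling updates) with the need to non-deterministically activate $c_1$ at a data-dependent position, and with the need to verify bit-level constraints without resorting to arbitrary counter updates. This is addressable via a guess-and-verify strategy using regular constraints on the input, folded into the control state; in particular, routing all arithmetic other than $x = 2^y$ through the regular-predicate layer is precisely what keeps the counter updates inside the admissible set. A size accounting shows that $\mathcal{V}_\Phi$ has singly-exponential size in $|\Phi|$: each regular-predicate DFA is polynomial, their synchronous product is singly exponential, and the number of counter pairs is linear in the number of exponential atoms. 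The EXPSPACE upper bound for \laVASS emptiness then transfers, through this reduction, to the desired EXPSPACE upper bound for existential generalised Sem{\"e}nov arithmetic.
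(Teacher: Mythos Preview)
Your proposal is correct and follows essentially the same route as the paper's argument at the end of \Cref{sec:semenov-to-lavass}: DNF normalisation, DFAs for the linear equations and the regular predicates, the two-counter gadget of \Cref{fig:exponential-gadget} for each atom $x=2^y$, a synchronous product yielding a restricted \laVASS of singly-exponential size with $O(|\Phi|)$ counter pairs, and finally \Cref{prop:main} together with Savitch's theorem. The only slip is in your size accounting: the DFAs for the linear atoms are already singly exponential (not polynomial) in $|\Phi|$ by \Cref{lem:equations-lavass}, but since there are only polynomially many of them their product remains singly exponential, so your final bound is unaffected.
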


\subsection{Affine vector addition systems with states}
A technical tool for our decidability results is a tailor-made class
of \emph{labelled affine vector addition systems with states
  (\laVASS)}. Formally, an \laVASS is a tuple $V=\langle Q,d,
\Sigma,\Delta,\lambda,q_0,F,\Phi\rangle$, where
\begin{itemize}
\item $Q$ is a finite set of \emph{control states},
\item $d\ge 0$ is the \emph{dimension of $V$},
\item $\Sigma$ is a \emph{finite alphabet},
\item $\Delta \subseteq Q\times \mathcal P(\Sigma) \times Q$ is a
  finite set of \emph{transitions},
\item $\lambda\colon \Delta \to \Ops^d$ is the \emph{update function},
  where $\Ops \subseteq \Z[x]$ is the set of all affine functions over
  a single variable,
\item $q_0\in Q$ is the \emph{initial control state},
\item $F\subseteq Q$ is the set of \emph{final control states}, and
\item $\Phi$ is a a quantifier-free formula of Presburger arithmetic
  $\Phi(x_1,\ldots,x_d)$ that specifies a finite set $\eval
  \Phi\subseteq \N^d$ of \emph{final counter values}.
\end{itemize}
Note that when $d=0$ then $V$ is essentially a non-deterministic
finite automaton.

The set of \emph{configurations} of $V$ is $C(V)\defeq Q\times
\N^d$. The \emph{initial configuration} of $V$ is
$c_0=(q_0,0,\ldots,0)$, and the set of \emph{final configurations}
is \[C_f(V) \defeq \left \{ (q_f,\vec v) : q_f\in F,\vec v\in \eval\Phi
\right \}\,.\]
%
For an update function $\lambda\colon \Delta \to \Ops^d$,
we define
\[
\norm \lambda \defeq \max \{ \abs a +\abs b :
\lambda(t)=(f_1,\ldots,f_d), f_i = a x + b, 1\le i\le d, t\in T\}\,.
\]
We define the size $\abs V$ of an \laVASS $V=\langle Q,d,
\Sigma,\Delta,\lambda,q_0,F,S_f\rangle$ as
\[
\abs V \defeq \abs Q + \abs \Delta \cdot (d+1) \cdot \log (\norm \lambda + 1) +
\abs \Phi \,.
\]

An \laVASS induces an (infinite) labelled directed \emph{configuration
  graph} ${G=(C(V),\xrightarrow{}})$, where ${\xrightarrow{}}
\subseteq C(V)\times \Sigma \times C(V)$ such that $c\xrightarrow{a}
c'$ if and only if
\begin{itemize}
\item $c=(q,m_1,\ldots,m_d)$ and $c'=(q',m_1',\ldots,m_d')$,
\item there is $t=(q,A,q')\in \Delta$ such that
  \begin{itemize}
  \item $a\in A$,
  \item $\lambda(t)=(f_1,\ldots,f_d)$, and
  \item $m_i'=f_i(m_i)$ for all $1\le i\le d$.
  \end{itemize}
\end{itemize}
We lift the definition of $\xrightarrow{}$ to words $w=a_1\cdots a_n
\in \Sigma^*$ in the natural way, and thus write $c\xrightarrow{w}c'$
whenever $c \xrightarrow{a_1} c_1 \xrightarrow{a_2} \cdots c_{n-1}
\xrightarrow{a_n} c'$ for some $c_1,\ldots,c_{n-1} \in C$. The
\emph{language} $L(V) \subseteq \Sigma^*$ of $V$ is defined as
\[
L(V) \defeq \{ w \in \Sigma^* : c_0 \xrightarrow{w} c_f, c_f \in C_f(V)
\}\,.
\]
If we are interested in runs of \laVASS, we write $\pi = c_1\xrightarrow{t_1} c_2\xrightarrow{t_2}
\cdots \xrightarrow{t_{n-2}}c_{n-1} \xrightarrow{t_{n-1}} c_n$ to emphasise the sequence of configurations and 
transitions taken. For $1\le i\le j\le n$, we denote by $\pi[i,j]$ the
subsequence $c_i\xrightarrow{t_i} c_{i+1}\xrightarrow{t_{i+1}} \cdots \xrightarrow{t_{j-1}}c_j$. We denote by
$\val(\pi,m_i)$ the value $m_i$ of the $i$-th counter in the last configuration of $\pi$.

The \emph{emptiness problem} for an \laVASS is to decide whether
$L(V)\neq \emptyset$. Affine VASS are a powerful class of infinite
state systems, and even in the presence of only two counters and
$\Phi(x_1,x_2) \equiv x_1 = x_2$, the emptiness problem is
undecidable~\cite{Reichert2015}.  In \Cref{sec:emptiness-rlavass}, we
identify a syntactic fragment of \laVASS for which emptiness can be
decided in EXPSPACE.

\subsubsection*{Closure properties of languages of \laVASS}

We briefly discuss closure properties of \laVASS and show that they
are closed under union and intersection, and restricted kinds of
homomorphisms and inverse homomorphisms, using essentially the
standard constructions known for finite-state automata. Let
$V_i=\langle Q_i,d_i,
\Sigma,\Delta_i,\lambda_i,q_0^{(i)},F_i,\Phi_i\rangle$, $i\in
\{1,2\}$, be two \laVASS.

\begin{proposition}\label{prop:union-intersetion-closure}
  The languages of \laVASS are closed under union and intersection.
  Moreover, for $V$ such that $L(V) = L(V_1) \cap L(V_2)$, we have
  $\abs{V} \le \abs{V_1} \cdot \abs{V_2}$.
\end{proposition}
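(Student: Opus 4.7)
The plan is to adapt the standard product and parallel constructions from finite automata to \laVASS, carefully tracking how the counters and the Presburger acceptance condition $\Phi$ need to be combined.

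For intersection, I would use a synchronised product. Set $Q = Q_1 \times Q_2$, dimension $d = d_1 + d_2$, initial state $(q_0^{(1)}, q_0^{(2)})$, and final states $F = F_1 \times F_2$. For each pair $t_1 = (p_1, A_1, q_1) \in \Delta_1$ and $t_2 = (p_2, A_2, q_2) \in \Delta_2$ with $A_1 \cap A_2 \ne \emptyset$, add a transition $((p_1, p_2), A_1 \cap A_2, (q_1, q_2))$ whose update vector is the concatenation of $\lambda_1(t_1)$ (on the first $d_1$ coordinates) and $\lambda_2(t_2)$ (on the last $d_2$). Take $\Phi(x_1, \ldots, x_{d_1+d_2}) = \Phi_1(x_1, \ldots, x_{d_1}) \wedge \Phi_2(x_{d_1+1}, \ldots, x_{d_1+d_2})$. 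A straightforward induction on run length then yields $L(V) = L(V_1) \cap L(V_2)$, since the two coordinate blocks evolve independently but must agree on the input symbol at each step.

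For union, I would add a fresh initial state $q_0$ to $Q_1 \sqcup Q_2$, again with $d_1 + d_2$ counters split into two blocks. Each transition of $V_i$ is lifted to act on its own block of counters while applying the identity update $x \mapsto x$ on the complementary block; additionally, each outgoing transition of $q_0^{(i)}$ is copied to originate from $q_0$. Because counters start at $0$ and the identity preserves $0$, the inactive block stays at zero along any run, so acceptance can be encoded by the disjunction $\Phi = (\Phi_1 \wedge \bigwedge_{j > d_1} x_j = 0) \vee (\Phi_2 \wedge \bigwedge_{j \le d_1} x_j = 0)$, with $F = F_1 \cup F_2$ (plus $q_0$ added to $F$ in case the original $q_0^{(i)}$ was accepting on zero counters, to cover the empty word).

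The size bound $\abs{V} \le \abs{V_1} \cdot \abs{V_2}$ for the intersection construction follows by routine substitution into the definition of $\abs{V}$: one reads off $\abs{Q} = \abs{Q_1} \cdot \abs{Q_2}$, $\abs{\Delta} \le \abs{\Delta_1} \cdot \abs{\Delta_2}$, $\norm{\lambda} = \max(\norm{\lambda_1}, \norm{\lambda_2})$, and $\abs{\Phi} \le \abs{\Phi_1} + \abs{\Phi_2} + O(1)$, and the cross terms that appear when expanding $\abs{V_1} \cdot \abs{V_2}$ dominate each summand of $\abs{V}$. I expect no genuine obstacle; the only care point is the union construction, where one must verify that the inactive block of counters really stays at zero so that the disjunctive $\Phi$ correctly captures acceptance by either $V_1$ or $V_2$, which is immediate once the identity updates are in place.
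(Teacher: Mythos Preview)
Your intersection construction and size analysis match the paper's exactly: the paper takes the same synchronised product on $Q_1 \times Q_2$ with dimension $d_1 + d_2$, concatenated updates, and conjunctive $\Phi$, and proves correctness by the same induction on run length (carried out in the appendix).

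Your union construction, however, has a gap in the choice of $\Phi$. The disjunct $\Phi_2(x_{d_1+1},\ldots,x_{d_1+d_2}) \wedge \bigwedge_{j \le d_1} x_j = 0$ can be triggered by a run that stayed entirely on the $V_1$ side, provided the $V_1$ counters also happen to end at zero and $\vec 0 \in \eval{\Phi_2}$. Concretely, let $V_1$ have a single initial-and-final state with one counter, a self-loop on some letter $a$ with the identity update, and $\Phi_1 \equiv (x_1 = 5)$, so $L(V_1)=\emptyset$; let $V_2$ have $L(V_2)=\{b\}$ with one counter and $\Phi_2 \equiv (x_1 = 0)$. In your product the word $a$ reaches a final control state with counter vector $(0,0)$, the second disjunct of your $\Phi$ holds, and $a$ is wrongly accepted. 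The underlying issue is that $\Phi$ cannot inspect the control state, so which branch the run took must be encoded in the counters; adding one extra indicator counter that stays $0$ on the $V_1$ side and is set to $1$ on the first step into the $V_2$ side, and guarding the two disjuncts of $\Phi$ by its value, repairs the construction. The paper itself does not spell out the union at all, merely asserting it is trivial by non-determinism, so this subtlety is not visible there either.
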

\begin{proof}
  This result can be obtained by generalising the standard
  constructions known from non-deterministic finite-state automata.
  The set of control states of the \laVASS $V$ accepting the
  intersection of \laVASS $V_1$ and $V_2$ is $Q_1\times Q_2$. The
  dimension of $V$ is the sum of the dimensions of $V_1$ and $V_2$,
  and the counters of $V_1$ and $V_2$ get independently simulated in
  the counters of $V$. Upon reading an alphabet symbol $a$, the
  \laVASS $V$ then simultaneously simulates the respective transitions
  of $V_1$ and $V_2$ for $a$; further details are relegated to
  \Cref{app:intersection}.
\end{proof}

Note that since \laVASS languages contain regular languages,
\Cref{prop:union-intersetion-closure} in particular enables us to
intersect \laVASS languages with regular languages.

Let $\Sigma,\Gamma$ be two finite alphabets. Recall that a
homomorphism $h\colon \Gamma^* \to \Sigma^*$ is fully defined by
specifying $h(a)$ for all $a\in \Gamma$. We call $h$ a
\emph{projection} if $\abs{h(a)}=1$ for all $a\in \Gamma$.
\begin{proposition}\label{prop:projection-closure}
  The languages of \laVASS are closed under projections and inverses
  of projections.
\end{proposition}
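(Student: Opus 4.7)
The key observation is that transitions in an \laVASS carry \emph{sets of alphabet symbols} as labels, and a projection $h\colon \Gamma^* \to \Sigma^*$ preserves word length (since $|h(a)|=1$ for every $a\in\Gamma$). This means the underlying sequence of transitions, and therefore all counter updates, can be kept intact; we only need to relabel.

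For closure under projection, suppose $V = \langle Q, d, \Gamma, \Delta, \lambda, q_0, F, \Phi\rangle$ is an \laVASS over $\Gamma$ and $h\colon \Gamma^* \to \Sigma^*$ is a projection. I construct $V' = \langle Q, d, \Sigma, \Delta', \lambda', q_0, F, \Phi\rangle$ by replacing every transition $t=(q,A,q')\in\Delta$ with $t'=(q,h(A),q')\in\Delta'$, where $h(A) \defeq \{h(a) : a\in A\} \subseteq \Sigma$, and setting $\lambda'(t') \defeq \lambda(t)$. For every single-step edge $c\xrightarrow{a}c'$ in $V$ via transition $t$, the relabelled transition $t'$ in $V'$ allows $c\xrightarrow{h(a)}c'$, and conversely every step in $V'$ using a letter $b\in h(A)$ can be lifted to some $a\in A$ with $h(a)=b$ in $V$. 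Composing these correspondences along entire runs gives $L(V') = h(L(V))$.

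For closure under inverse projection, suppose $V$ is an \laVASS over $\Sigma$ and $h\colon \Gamma^* \to \Sigma^*$ is a projection. I construct $V'$ over $\Gamma$ with the same state space, dimension, counters, and update function, this time replacing every transition $(q,A,q')$ with $(q, h^{-1}(A)\cap\Gamma, q')$. Again the only change is the alphabet labels on transitions, and a word $w\in\Gamma^*$ drives $V'$ from $c_0$ to some $c_f\in C_f(V)$ precisely when $h(w)$ drives $V$ from $c_0$ to $c_f$. Hence $L(V') = h^{-1}(L(V))$.

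There is no real obstacle here: once one notes that \laVASS transitions are labelled by subsets of the alphabet rather than individual letters and that a projection is length-preserving, both constructions are direct analogues of the standard finite-automata relabelling arguments, and runs in $V$ and $V'$ are in one-to-one correspondence with identical counter evolutions. It is worth highlighting that the argument would break for general homomorphisms, since then a single transition would need to perform several counter updates (or none at all) to match one symbol of the image, which cannot be emulated while respecting the structure of $\lambda$.
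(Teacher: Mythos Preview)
Your proposal is correct and follows essentially the same approach as the paper: relabel each transition $(q,A,q')$ by $(q,h(A),q')$ for projections and by $(q,h^{-1}(A),q')$ for inverse projections, keeping $\lambda$ unchanged. You supply more justification than the paper (the run-by-run correspondence and the remark on why length-preservation matters), but the construction is identical.
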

\begin{proof}
  Let $h\colon \Gamma^* \to \Sigma^*$ be a projection. Given an
  \laVASS $V=\langle Q,d, \Sigma,\Delta,\lambda,q_0,F,S_f\rangle$, to
  obtain closure under projections replace any $t=(q,A,q')\in \Delta$
  with $t'=(q,h(A),q')$, and set $\lambda(t')\defeq \lambda(t)$. To
  obtain closure under inverse projections, replace any $t=(q,A,q')\in
  \Delta$ with $t'=(q,h^{-1}(A),q')$ and set $\lambda(t')\defeq
  \lambda(t)$.
\end{proof}

\section{Reducing Sem{\"e}nov arithmetic to restricted \laVASS}\label{sec:semenov-to-lavass}

Let $\Sigma=\{0,1\}$. In this section, we show how given a
quantifier-free formula $\Phi(x_1,\ldots,x_d)$ of Sem{\"e}nov
arithmetic, we can construct an \laVASS $V$ over the alphabet
$\Sigma_d\defeq \{0,1\}^d$ such that $\eval{L(V)}= \{\vec
x\in\N^d:\Phi(\vec x)\}$. We will subsequently observe that the
resulting \laVASS enjoy strong structural restrictions, giving rise to
the fragment of restricted \laVASS that we then formally define. For
our purposes, it will be sufficient to primarily focus on formulas
$\Phi$ of Sem{\"e}nov arithmetic which are conjunctions of atomic
formulas.
\begin{lemma}\label{lem:conjunctive-lavass}
  Consider a positive conjunctive formula of Sem{\"e}nov arithmetic
  \[
  \Phi(\vec x) \equiv A\cdot \vec x = \vec b \wedge
  \bigwedge_{i\in I} x_i = 2^{y_i},
  \]
  where $A\in \Z^{m\times n}$, $\vec b\in \Z^m$, $I$ is
  a finite index set, and $x_i$ and $y_i$
  are variables from $\vec x$. There is an \laVASS $V$ of dimension
  $2 \abs I$ and of size $(\norm A_{1,\infty} + \norm{\vec b}_\infty +
  2)^{O(m+\abs I)}$ such that $L(V)=\eval \Phi$.
\end{lemma}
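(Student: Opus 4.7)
The plan is to build $V$ as the intersection of $|I|+1$ component \laVASS{} via \Cref{prop:union-intersetion-closure}: one NFA (a \laVASS{} of dimension~$0$) for the linear conjunct $A\vec x=\vec b$, and one dimension-$2$ \laVASS{} $V_i$ for each exponential atom $x_i=2^{y_i}$. The linear part is regular because $A\vec x=\vec b$ is Presburger-definable, while the exponential part is tailor-made for the affine updates available in a \laVASS.

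For the linear conjunct I would use the classical carry-based Presburger automaton: the state records a carry vector $\vec c\in\Z^m$ which, by a standard induction on the reading of bit tuples, stays bounded in each coordinate by $O(\norm{A}_{1,\infty}+\norm{\vec b}_\infty)$; this yields an NFA of size $(\norm{A}_{1,\infty}+\norm{\vec b}_\infty+2)^{O(m)}$. Because the paper reads input msd-first, I would either build the standard lsd-first carry automaton and apply the regular-language reversal construction, or use the msd-first variant that guesses the target and verifies it online; the state-count bound is the same either way.

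For each atom $x_i=2^{y_i}$ I would construct a dimension-$2$ \laVASS{} $V_i$ with control states $\{\textrm{init},\textrm{seen}\}$ and counters $c_1^{(i)},c_2^{(i)}$ initialised to $0$. In state \textrm{init}, only letters with $x_i$-bit equal to $0$ are enabled, $c_1^{(i)}$ is left unchanged, and $c_2^{(i)}$ is updated by $c\mapsto 2c+b$ with $b$ the $y_i$-bit of the current letter; a unique transition $\textrm{init}\to\textrm{seen}$ fires on letters with $x_i$-bit equal to $1$, again leaving $c_1^{(i)}$ unchanged and updating $c_2^{(i)}$ as above; in state \textrm{seen}, only $x_i$-bit equal to $0$ is allowed, $c_1^{(i)}$ is incremented by $1$, and $c_2^{(i)}$ is updated as before. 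The final formula for $V_i$ is $c_1^{(i)}=c_2^{(i)}$. Correctness: an accepted $x_i$-component has the shape $0^*10^\ell$, so $x_i=2^\ell$; on such a run $c_1^{(i)}$ ends at $\ell$ (the number of \textrm{seen}-to-\textrm{seen} transitions), while $c_2^{(i)}$ ends at $\eval{y_i\text{-component}}=y_i$, so the acceptance condition $c_1^{(i)}=c_2^{(i)}$ is equivalent to $\ell=y_i$, i.e., $x_i=2^{y_i}$.

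Finally, I would form $V=A_{\textrm{lin}}\cap V_1\cap\cdots\cap V_{|I|}$ by iterated application of \Cref{prop:union-intersetion-closure}, obtaining an \laVASS{} of dimension $0+2|I|=2|I|$ with $L(V)=\eval{\Phi}$. For the size bound, the product control-state space has cardinality $(\norm{A}_{1,\infty}+\norm{\vec b}_\infty+2)^{O(m)}\cdot 2^{|I|}=(\norm{A}_{1,\infty}+\norm{\vec b}_\infty+2)^{O(m+|I|)}$; using the set-labelled transitions of \laVASS{} gives $|\Delta|\le|Q|^2$ of the same order; the update-norm is $O(1)$ since all affine coefficients are at most $2$; and the final formula has size $O(|I|)$, so the sum in the definition of $|V|$ stays within the claimed bound. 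The main obstacle is the size accounting for $A_{\textrm{lin}}$: a naive encoding of carries could introduce an extra factor in the base of the exponent, so I would be careful to use the tightest bound on reachable carries and to collapse set-labelled transitions that share source and target states. The exponential-atom construction is the conceptually simpler half, once one notices that a single pair of counters can simultaneously measure the length of the suffix after the unique $1$-bit of $x_i$ and parse $y_i$ msd-first.
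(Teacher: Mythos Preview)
Your proposal is correct and follows essentially the same approach as the paper. The paper derives the lemma from \Cref{lem:equations-lavass} (the classical Presburger automaton for $A\vec x=\vec b$, with the same state bound you quote), \Cref{lem:exponential-lavass} (the two-counter gadget in \Cref{fig:exponential-gadget}, which is your $\textrm{init}/\textrm{seen}$ gadget up to the inessential difference that the paper only admits $y_i$-bit $0$ while in the initial state), and closure under intersection and inverse projection; your construction of each $V_i$ directly over $\Sigma_d$ is exactly the inverse-projection step carried out inline, and your size accounting matches the paper's.
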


We derive this lemma in two parts. First, it is well known that the
sets of solutions of the systems of linear equations $A\cdot \vec x
= \vec b$ can be represented by a regular language and are hence definable via
an \laVASS.
\begin{lemma}[\cite{WB00}, see also {\cite[Eqn.~(1)]{GuepinHW19}} ]\label{lem:equations-lavass}
  Given a system of equations $\Phi \equiv A\cdot \vec x = \vec b$
  with $A \in \Z^{m\times d}$ and $\vec b\in \Z^m$, there is a DFA $V$
  with at most $2^m \cdot \max\{ \norm A_{1,\infty}, \norm{\vec
  b}_\infty \}^m$ states such that $L(V)$ is zero-closed and
  $\eval{L(V)}=\eval \Phi$.
\end{lemma}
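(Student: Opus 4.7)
The plan is to build $V$ as a DFA whose states track the residual $A\cdot \eval{w}$, where $w$ is the msd prefix read so far. Concretely, I take the state set to be $\{-M,\ldots,M\}^m \cup \{\bot\}$ with $M \defeq \max\{\norm{A}_{1,\infty},\norm{\vec b}_\infty\}$ and $\bot$ a sink, the initial state being $\vec 0$, the unique accepting state being $\vec b$, and, on reading a digit vector $\vec u \in \{0,1\}^d$, the transition being $\vec r \mapsto 2\vec r + A\vec u$ whenever this value stays in the cube, and $\vec r \mapsto \bot$ (absorbing) otherwise. This transition rule mirrors the msd identity $\eval{w\,\vec u} = 2\eval{w} + \vec u$, so a straightforward induction on $\abs w$ shows that the state reached after reading $w$ equals $A\cdot \eval w$ whenever it is not $\bot$. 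Consequently, $\vec b$ is reached exactly when $A\eval{w} = \vec b$, establishing $\eval{L(V)} = \eval{\Phi}$; zero-closedness is immediate since $2\vec 0 + A\vec 0 = \vec 0$, so the initial state is fixed by any prefix of $\vec 0$-digits.

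The actual content is to justify the sink-collapse, i.e., that every state $\vec r$ with $\norm{\vec r}_\infty > M$ has no path to $\vec b$, so merging such states does not alter $L(V)$. Suppose a path of length $k$ from $\vec r$ to $\vec b$ reads digits $\vec u_0,\ldots,\vec u_{k-1}$. Unrolling the transition rule yields
\[
\vec b \;=\; 2^k\vec r + \sum_{i=0}^{k-1} 2^{k-1-i} A\vec u_i.
\]
Since $\norm{A\vec u_i}_\infty \leq \norm{A}_{1,\infty} \leq M$ for each $i$, the sum on the right has infinity-norm at most $M(2^k-1)$, so rearranging gives $\norm{\vec r}_\infty \leq M + (\norm{\vec b}_\infty - M)/2^k \leq M$, contradicting $\norm{\vec r}_\infty > M$. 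Hence every state outside the cube is dead, and collapsing it to $\bot$ is sound.

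The only mildly fiddly step is matching the stated bound $2^m\cdot M^m = (2M)^m$ rather than the looser $(2M+1)^m+1$ that the cube argument literally yields. This can be handled either by tightening the reachable portion of the cube via a divisibility analysis of $\vec r \mapsto 2\vec r + A\vec u$ (ruling out half of each coordinate range), or by invoking the explicit NDD construction of Wolper and Boigelot~\cite{WB00} and the refinement by Guépin et al.~\cite{GuepinHW19}. The remaining verification—determinism of $V$, correctness $\eval{L(V)}=\eval{\Phi}$, and zero-closedness—follows immediately from the construction.
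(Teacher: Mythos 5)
The paper does not actually prove this lemma; it is imported verbatim from Wolper--Boigelot and Gu\'epin et al., so there is no in-paper argument to compare against. Your construction is precisely the standard most-significant-digit-first automaton from those references, and the substance of your argument is sound: the msd recurrence $\eval{w\,\vec u}=2\eval{w}+\vec u$ correctly justifies that the state reached after $w$ is $A\cdot\eval{w}$, the unrolling argument correctly shows that every state $\vec r$ from which $\vec b$ is reachable satisfies $\norm{\vec r}_\infty\le M$, and the self-loop on the digit $\vec 0$ at the initial state gives zero-closedness.

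The one point where you fall short of the statement is the one you flag yourself: the cube $\{-M,\ldots,M\}^m$ plus a sink has $(2M+1)^m+1$ states, which exceeds the claimed $2^m\cdot M^m$. To actually close this, sharpen the liveness analysis per coordinate rather than per cube: writing $P_j$ and $N_j$ for the sums of the positive and of the absolute values of the negative entries of row $j$ of $A$, the same unrolling gives $-P_j+(b_j+P_j)/2^k\le r_j\le N_j+(b_j-N_j)/2^k$ for a live state, so each coordinate of a live state ranges over an interval of length at most $P_j+N_j+\abs{b_j}\le \norm{A}_{1,\infty}+\norm{\vec b}_\infty\le 2M$, which yields the stated $(2M)^m$ bound on live states. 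This is a routine tightening rather than a missing idea, and in any case the discrepancy is immaterial to how the lemma is used in the paper, where the bound is absorbed into $(\norm{A}_{1,\infty}+\norm{\vec b}_\infty+2)^{O(m+\abs{I})}$.
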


The crucial part, which requires the power of \laVASS, are exponential
equations $x=2^y$. An \laVASS $V$ with two counters and
$\eval{L(V)}=\eval{x=2^y}$ is depicted
in \Cref{fig:exponential-gadget}. Control-states are depicted as
circles and transitions as arrows between them. The vector before the
colon is the alphabet symbol read. For instance, the transition from
$q_0$ to $q_1$ reads the alphabet symbol $(1,0)\in \{0,1\}^2$. After a
colon, we display the counter operations when reading the alphabet
symbol, the operation on the first counter is displayed on the top and
the operation on the second counter on the bottom. Here and
subsequently, for presentational convenience, $\wID$ is the identity
$x\mapsto x$, and $\wTT$ and $\wTTPO$ are the functions $x \mapsto 2
x$ and $x \mapsto 2 x + 1$, respectively.  Thus, the transition from
$q_0$ to $q_1$ applies the identity function on the first counter, and
the function $x\mapsto 2x$ on the second counter.
\begin{figure}[!t]
\centering
\begin{tikzpicture}[->,shorten >=1pt,auto,node distance=4cm,semithick]

\tikzstyle{every state}=[draw=black,text=black,minimum size=25pt]

\node[initial by arrow,state,initial text=] (1) {$q_0$};
\node[state, accepting] (2) [right of=1] {$q_1$};

\path 
    (1) edge node [align=center]{\small{$\left[\begin{matrix} 1\\0 \end{matrix}\right] : \begin{array}{c}
         \wID  \\
         \wTT
    \end{array}$}}  (2)
    (1) edge [loop above] node {\small{$\left[\begin{matrix} 0\\0 \end{matrix}\right] : \begin{array}{c}
         \wID  \\
         \wTT
    \end{array}$}} (1)
    (2) edge [loop above] node {\small{$\left[\begin{matrix} 0\\b \end{matrix}\right] : \begin{array}{c}
         \wPP  \\
         \wTT + b
    \end{array}$}} (2);
\end{tikzpicture}
  \caption{Gadget with two counters for exponential equations $x=2^y$,
  where $b\in\{0,1\}$.}  \label{fig:exponential-gadget}
\end{figure}
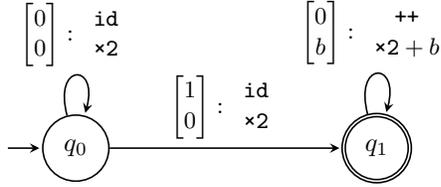

The idea behind the gadget in \Cref{fig:exponential-gadget} is as
follows. For an example, suppose that $y=5$, then $x=32$, and in
binary the sequence of digits of $x$ and $y$ looks as follows:
\[
\left[
\begin{smallmatrix}
  x \\ y
\end{smallmatrix}
\right] = \left[
\begin{smallmatrix}
  1 \\ 0
\end{smallmatrix}
\right]
\left[
\begin{smallmatrix}
  0 \\ 0
\end{smallmatrix}
\right]
\left[
\begin{smallmatrix}
  0 \\ 0
\end{smallmatrix}
\right]
  \cdots
\left[
\begin{smallmatrix}
  0 \\ 0
\end{smallmatrix}
\right]
\left[
\begin{smallmatrix}
  0 \\ 1
\end{smallmatrix}
\right]
\left[
\begin{smallmatrix}
  0 \\ 0
\end{smallmatrix}
\right]
\left[
\begin{smallmatrix}
  0 \\ 1
\end{smallmatrix}
\right]
\]
Since $x=2^y$, we have that $x\in 0^*10^*$, and the number of tailing
zeros of $x$ is equal to the value of $y$. Thus, once a $1$ in the
binary representation of $x$ has been read, the first counter in the
gadget of \Cref{fig:exponential-gadget} keeps incrementing and counts
the number of tailing zeros of $x$. At the same time, the second
counter in the gadget of \Cref{fig:exponential-gadget} keeps the value
0 until it reads the first $1$ of the binary expansion of $y$, since
$2\cdot 0=0$. It then computes the value of $y$ in binary on the
second counter by multiplying the value of the second counter by 2
when reading a zero, and multiplying by two and adding one when
reading a one. The \laVASS in \Cref{fig:exponential-gadget} only
accepts when the first and the second counter have the same value,
i.e., when the number of tailing zeros of the binary expansion of $x$
equals the value of $y$, as required.
\begin{lemma}\label{lem:exponential-lavass}
  There is a fixed \laVASS $V$ of dimension two such that $L(V)$ is
  zero closed and $\eval{L(V)}=\eval{x = 2^y}$.
\end{lemma}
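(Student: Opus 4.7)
The plan is to exhibit the two-state \laVASS $V$ already depicted in Figure~\ref{fig:exponential-gadget}, with $Q = \{q_0, q_1\}$, initial state $q_0$, accepting state $q_1$, dimension $2$, alphabet $\Sigma_2 = \{0,1\}^2$, and final-counter condition $\Phi(x_1,x_2) \equiv x_1 = x_2$. Zero-closedness follows immediately from the self-loop at $q_0$ reading $\left[\begin{smallmatrix}0\\0\end{smallmatrix}\right]$: it applies $\wID$ to the first counter and $\wTT$ to the second, so both counters remain at $0$ while reading any prefix of $\left[\begin{smallmatrix}0\\0\end{smallmatrix}\right]$'s; hence prepending such a prefix to an accepted word yields another accepted word.

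For soundness ($\eval{L(V)} \subseteq \eval{x=2^y}$), I would analyse the shape of an accepting run. The only route from $q_0$ to $q_1$ is by looping on $\left[\begin{smallmatrix}0\\0\end{smallmatrix}\right]$ and then firing the unique transition reading $\left[\begin{smallmatrix}1\\0\end{smallmatrix}\right]$; from $q_1$ one may only loop on symbols in $\{\left[\begin{smallmatrix}0\\0\end{smallmatrix}\right], \left[\begin{smallmatrix}0\\1\end{smallmatrix}\right]\}$. So any accepted word $w$ has its $x$-projection of the form $0^* 1 0^k$, giving $\eval{x} = 2^k$. When the run enters $q_1$, both counters are $0$; the first counter is then incremented exactly $k$ times and ends at $k$, while the second counter, updated by $x \mapsto 2x + b$ in step with the $y$-bits read in msd order, ends at the value $\eval{y}$. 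The accepting condition $x_1 = x_2$ then forces $k = \eval{y}$, i.e.\ $\eval{x} = 2^{\eval{y}}$.

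For completeness, given $(x,y) \in \N^2$ with $x = 2^y$, I would pick the msd encoding of $(x,y)$ of length $y+1$: the first symbol is $\left[\begin{smallmatrix}1\\0\end{smallmatrix}\right]$ (the MSB of $x = 2^y$ is $1$, and $y$ padded to $y+1$ bits has MSB $0$, since $y < 2^y$ for $y \ge 1$; the case $y=0$ is simply the length-one word $\left[\begin{smallmatrix}1\\0\end{smallmatrix}\right]$), followed by $y$ further pairs whose $x$-bit is $0$. Running $V$ on this word reaches $(q_1, 0, 0)$ after the first symbol; each of the subsequent $y$ steps increments the first counter and performs $x \mapsto 2x + b$ on the second, leaving the final configuration $(q_1, y, \eval{y}) = (q_1, y, y)$, which is accepting.

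The argument is essentially a verification of the construction already sketched informally, so I do not expect any genuine obstacle. The only slightly delicate points are pinning down the exact shape of an accepting run (for soundness) and handling the boundary case $y = 0$ together with the choice of padding length for $y$ (for completeness); both are resolved by the observation that the second counter stays at $0$ throughout the $q_0$-phase because $\wTT$ fixes $0$.
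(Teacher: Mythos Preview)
Your proposal is correct and follows exactly the approach the paper takes: the paper does not give a separate formal proof of this lemma but simply exhibits the gadget of Figure~\ref{fig:exponential-gadget} and explains informally, in the paragraph preceding the lemma, why the first counter counts the trailing zeros of $x$ while the second computes $\eval{y}$, so that the acceptance condition $x_1=x_2$ enforces $x=2^y$. Your write-up is a faithful formalisation of that same argument, including the zero-closedness observation and the soundness/completeness split.
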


\Cref{lem:conjunctive-lavass} is now an easy consequence of
\Cref{lem:equations-lavass,lem:exponential-lavass} together with the
closure of \laVASS languages under intersection
(\Cref{prop:union-intersetion-closure}) and inverse homomorphisms
(\Cref{prop:projection-closure}).

A closer look at the gadget constructed in
\Cref{fig:exponential-gadget} reveals a number of important structural
properties:
\begin{enumerate}[(i)]
\item all operations performed on the first counter are either the
  identity map $\wID$ or increments $\wPP$;
\item all operations performed on the second counter are affine
  updates $\wTT$ and $\wTTPO$;
\item once the first counter gets incremented on a run, it gets
  incremented at every subsequent transition; and
\item only counter configurations in which the value of the first
  counter equals the value of the second counter are accepted.
\end{enumerate}
Those properties are crucial to obtain decidability of (generalised)
existential Sem{\"e}nov arithmetic. 
\begin{definition}
  An \laVASS is \emph{restricted} if it has an even number of $2d$
  counters called $x_i,y_i$, $1\le i\le d$, such that every counter
  pair $(x_i,y_i)$ adheres to the above Properties~(i)--(iv), and the
  set of final counter values is defined by
  $\Phi \equiv \bigwedge_{1\le i\le d} x_i=y_i$.
\end{definition}
For convenience, when referring to the counters in a pair, we
subsequently refer to the first counter as its \emph{$x$-counter} and to
the second counter as its \emph{$y$-counter}. We will usually write $m$
for the value of the $x$-counter and $n$ for the value of the
$y$-counter. The following is immediate from the construction in
\Cref{prop:union-intersetion-closure}:
\begin{proposition}
The languages of restricted \laVASS are closed under union,
intersection, projection and inverse projections.
\end{proposition}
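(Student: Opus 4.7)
The plan is to verify that the constructions from \Cref{prop:union-intersetion-closure,prop:projection-closure} preserve restrictedness, namely properties~(i)--(iv) together with the acceptance condition $\bigwedge_i x_i = y_i$. For projection and inverse projection this is immediate: the construction of \Cref{prop:projection-closure} modifies only the alphabet-label component of each transition, leaving the control states, counters, update maps, and accepting formula $\Phi$ untouched. So each pair $(x_i,y_i)$ behaves exactly as in the original \laVASS.

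For intersection, the product construction underlying \Cref{prop:union-intersetion-closure} places the counter pairs of $V_1$ and $V_2$ side by side, and a synchronous transition applies $V_j$'s update to $V_j$'s own counters for $j \in \{1,2\}$. Thus each pair $(x_i,y_i)$ is updated exactly as it would be in its original \laVASS, so~(i)--(iv) hold pair-by-pair, and the accepting condition becomes $\bigwedge_i x_i = y_i$ over the concatenated list of pairs, as required.

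Union is the delicate case, since \laVASS do not have $\epsilon$-transitions and every transition must specify updates on \emph{all} counters, including the ``foreign'' ones belonging to the other automaton. I would take the disjoint union of $V_1$ and $V_2$, introduce a fresh initial state $q_0$, and for every transition $(q_0^{(j)},A,q) \in \Delta_j$ out of the original initial state of $V_j$ add a copy $(q_0,A,q)$ applying $V_j$'s original update to $V_j$'s counters. On the foreign counters I apply $\wID$ to every $x$-counter and $\wTT$ to every $y$-counter. Since all counters start at $0$, we have $\wID(0)=0$ and $\wTT(0)=2\cdot 0=0$, so the foreign counters remain at $0$ throughout the run, and their accepting equality $x_i=y_i$ is automatic. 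These choices of updates respect the syntactic restrictions: $\wID$ on a foreign $x$-counter is allowed by~(i), with the increment phase of~(iii) never starting, while $\wTT$ on a foreign $y$-counter is permitted by~(ii). The active side of the run behaves exactly as in the original $V_j$, so its pairs inherit~(i)--(iv) and its accepting equality.

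The main obstacle lies precisely in the union case: the rigid restricted format forbids identity updates on $y$-counters by~(ii), so one has to notice that $\wTT(0)=0$ allows us to ``freeze'' the unused $y$-counters at $0$ without breaking any of the properties. Once this observation is in place, the verification is routine.
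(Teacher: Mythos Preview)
Your proposal is correct and in fact more thorough than the paper's own treatment, which simply asserts that the result ``is immediate from the construction in \Cref{prop:union-intersetion-closure}'' without further detail. Your verification that the product construction and the label-rewriting constructions preserve properties~(i)--(iv) and the accepting formula is exactly the check the paper leaves to the reader.

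Your handling of union deserves a remark. The paper's appendix dismisses union as ``trivial since we allow for non-determinism,'' but as you observe, the single-initial-state format together with the requirement that every transition update \emph{all} counters means one must say what happens to the foreign counters. Your choice of $\wID$ on foreign $x$-counters and $\wTT$ on foreign $y$-counters is the right fix: both operations are syntactically permitted by~(i) and~(ii), both fix $0$, property~(iii) is vacuous for counters that are never incremented, and the final equalities $x_i=y_i=0$ hold automatically on the inactive side. One small point you leave implicit: the $\wID$/$\wTT$ padding must be applied not only to the new transitions out of the fresh $q_0$ but to \emph{every} transition in the copy of $V_j$, since each such transition lives in the enlarged dimension; and the fresh $q_0$ must be made final if either original initial state is. These are routine, but worth stating since they are precisely where a careless disjoint-union would break.
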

Finally, subsequently, for technical convenience, we assume that for a
restricted \laVASS, we have $\abs Q\ge 2$. This is with no loss of
generality, since if $\abs Q=1$ then deciding emptiness is trivial
(the only control state is accepting if and only if the restricted
\laVASS has non-empty language).

The next section will be devoted to the proof of the main result of
this paper on restricted \laVASS:
\begin{proposition}\label{prop:main}
  Language emptiness of a restricted \laVASS $V$ with $2d$ counters is
  decidable in $\mathrm{NSPACE}( \abs V \cdot 2^{O(d)})$.
\end{proposition}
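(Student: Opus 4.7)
The strategy is to establish a small-model property for non-emptiness of a restricted \laVASS and then simulate a guess-and-check non-deterministic procedure within the stated space bound. The starting point is an analysis of the shape of any accepting run relative to a single counter pair $(x_i, y_i)$. Because the $y_i$-updates at least double the counter once \wTTPO{} has been applied, while $x_i$ grows by at most one per step, any accepting run splits (relative to pair $i$) into three successive phases: an idle phase during which $x_i = y_i = 0$, a middle phase of length $L_{2,i}$ during which only $x_i$ is incrementing, and a final phase of length $\ell_i = \lceil\log_2(m_i+1)\rceil$ in which both counters are active and the \wTT{}/\wTTPO{} operations on $y_i$ spell out the common final value $m_i$ in binary, most significant bit first. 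In particular one must have $m_i = L_{2,i} + \ell_i$, and after pair $i$'s final phase begins the whole run can last at most $\ell_i$ further steps.

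Next I would apply a pumping argument to configurations of an accepting run. Two configurations are declared equivalent when they share the same control state and the same per-pair ``signature'' recording, for each pair, which of its three phases is currently active and, when applicable, the position inside the final phase together with the alignment of its phase boundaries with those of the other pairs. Each pair contributes $O(1)$ coordinates to the signature, giving $2^{O(d)}$ signatures overall and $\abs V \cdot 2^{O(d)}$ equivalence classes in total. Excising a cycle between two equivalent configurations preserves the restricted-\laVASS discipline, and with a careful choice of signature it also preserves the arithmetic consistency $m_i = L_{2,i} + \ell_i$ for every pair. Iterating this reduction bounds the length of a minimal accepting run by $\abs V \cdot 2^{O(d)}$, and hence each counter by $2^{\abs V \cdot 2^{O(d)}}$.

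The $\mathrm{NSPACE}(\abs V \cdot 2^{O(d)})$ procedure then guesses transitions one at a time, maintaining the current control state and each counter value in binary within $\abs V \cdot 2^{O(d)}$ total tape cells. For each guessed transition it checks enabling, verifies the restricted-\laVASS discipline (in particular, that once \wPP{} is applied to $x_i$ it is applied on every subsequent step), updates the counters, and accepts as soon as a final control state is reached with $x_i = y_i$ for every pair. The main obstacle is designing the per-pair signature so that the pumping argument preserves the arithmetic consistency $m_i = L_{2,i} + \ell_i$ simultaneously for all $d$ pairs: shortening the run modifies every $L_{2,i}$, and the bits already spelled in each pair's final phase must continue to encode the adjusted $m_i$ correctly. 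Accommodating this interaction across pairs without letting the signature grow beyond $2^{O(d)}$ values is where I expect the bulk of the combinatorial work to lie, and it is what the paper seems to achieve through its ``counter elimination'' viewpoint, which essentially absorbs one pair's signature into the finite control at each step of an inductive construction.
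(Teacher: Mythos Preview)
Your three-phase decomposition for a single pair $(x_i,y_i)$ is correct and matches the paper's starting observation. The genuine gap is the small-model claim. You assert that a suitable per-pair signature with $O(1)$ coordinates yields $\abs V\cdot 2^{O(d)}$ equivalence classes and hence a minimal \emph{concrete} accepting run of length $\abs V\cdot 2^{O(d)}$. This bound is false. Take the product of $d$ copies of the exponential gadget, arranged so that the exponent of the $i$-th copy is the base of the $(i{+}1)$-th; the smallest accepting input then encodes a tower of exponentials of height $\Theta(d)$, and the shortest accepting run already has length a tower of height $\Theta(d)$ while $\abs V = 2^{O(d)}$. No pumping argument on concrete configurations can bring the run below this length, because no shorter accepting run exists. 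Consequently the concrete counter values cannot be stored in $\abs V\cdot 2^{O(d)}$ bits, and your guess-and-check procedure does not fit in the stated space. You flag the signature design as ``the main obstacle''; in fact the obstacle is not combinatorial bookkeeping but an information-theoretic impossibility for this particular plan.

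The paper circumvents this by never bounding the concrete run. Instead it introduces \emph{abstract configurations} that store, for each pair, only a residue class of $x_i$ and $y_i$ modulo the length $\ell_i$ of a specific simple loop encountered between the first increment of $x_i$ and the first \wTTPO{} on $y_i$, together with bounded difference trackers $u_i$ for $y_i-y_{i+1}$. The moduli $\ell_i$ are bounded by quantities $M_i\approx\abs Q^{32^{i-1}}$ that grow in an iterated fashion because each new loop must be found in a state space already inflated by the residues of earlier pairs; nevertheless $\sum_i\log M_i = \abs V\cdot 2^{O(d)}$, so one abstract configuration fits in the target space. A \emph{witnessing certificate} is then a short (simple-path-plus-tail) abstract run with recorded loop positions and induced ``$y$-constraints'' linking consecutive pairs. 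Non-emptiness is shown equivalent to the existence of such a certificate: from a certificate one reconstructs a concrete accepting run by pumping the recorded loops (this is where the tower-length run reappears), and conversely any accepting run yields a certificate by taking residues and excising abstract loops. The NSPACE algorithm guesses the certificate, not the concrete run. Your ``counter elimination'' intuition is in the right spirit, but the elimination replaces a counter by a residue class rather than by a bounded phase index, and that distinction is exactly what makes the space bound achievable.
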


Let us close this section with arguing how \Cref{thm:main} follows
from \Cref{prop:main}. Given a formulas $\Phi$ of generalised
Sem{\"e}nov arithmetic, we can in space $2^{O(|\Phi|)}$ construct the
disjunctive normal form of $\Phi$. Every disjunct can be assumed
to be of the form
\[
A\cdot \vec x = \vec b \wedge \bigwedge_{i\in I} x_i = 2^{y_i} \wedge
\bigwedge_{j\in J} R_j(\vec x),
\]
where the $R_j$ are predicates over regular languages. By
\Cref{lem:conjunctive-lavass}, there is a restricted \laVASS for $\Phi$ of
dimension $2 \abs I$ with a number of states bounded by $(\norm
A_{1,\infty} + \norm{\vec b}_\infty + 2)^{O(m+\abs
I)}=2^{p(\abs \Phi)}$ for some polynomial $p$ and whose language
represents the set of solutions to $A\cdot
\vec x = \vec b \wedge \bigwedge_{i\in I} x_i = 2^{y_i}$. Intersecting
with the DFA for the $R_j$ results in a restricted \laVASS $V$ with
$2\abs I=O(\abs \Phi)$ counters such that $\abs V = 2^{p(\abs \Phi)}$
for some polynomial $p$. By \Cref{prop:main}, it follows that
emptiness of $V$ is decidable in non-deterministic space exponential in
$p(\abs \Phi)$. We conclude the argument by recalling that NEXPSPACE=EXPSPACE by Savitch's theorem.

\section{Emptiness certificate for restricted \laVASS}\label{sec:emptiness-rlavass}

We now show that language emptiness for restricted \laVASS is
decidable in exponential space. Clearly, this problem reduces to
deciding whether a given restricted \laVASS has an accepting run, but
witnessing runs may be of non-elementary length. To overcome this
problem, we define an abstraction for configurations of
restricted \laVASS. Abstract configurations store residue classes of
counter values, as well as some further information that is required
to witnesses the existence of concrete accepting runs. Before giving
the formal definition, we provide some high level intuition that leads
to our definition of abstract configurations. Next, we introduce
reachability certificates, which are abstract runs with certain
further properties. We argue that the existence of \emph{witnessing
certificates}, which are special kinds of reachability certificates
witnessing that the language of an \laVASS is non-empty, are decidable
in EXPSPACE. The last two sections then establish that witnessing
certificates actually witness non-emptiness of restricted \laVASS.

\subsection{Key observations}

Given a \emph{restricted} \laVASS $V$ in dimension $d$, assuming that
$L(V)\neq \emptyset$, there is a run $\pi$ from an initial
configuration $c$ to a final configuration $c'$. With no loss of
generality, throughout this section, we assume that $\val(c',x_i) \ge
\val(c',x_{i+1})>0$ for all $1\le i<d$. In particular, this implies
that every counter gets incremented at least once along a path
witnessing non-emptiness.

Our first observation is that if along $\pi$ a counter $y_i$ achieves
the first time a non-zero value by taking a $\wTTPO$ labeled
transition, the length of the remaining segment of $\pi$ is bounded by
$O(\log(m_i+1))$, where $m_i$ is the value of counter $x_i$ before the
transition is taken. The reason is that, once $y_i$ has non-zero
value, its value at least doubles when a transition is taken. Hence if
$\pi$ is ``long'' then along $\pi$ there will be loops incrementing a
counter $x_i$ before the corresponding $y_i$ achieves non-zero value.

In the latter scenario, we may actually, subject to some bookkeeping,
discard concrete values of $x_i$ and $y_i$ and only store their
residue classes modulo $\ell_i$, where $\ell_i$ is the length of the
first loop incrementing $x_i$ along $\pi$. In particular, if we are
given a non-accepting run $\pi'$ such that $\val(\pi',x_i) \equiv
\val(\pi',y_i) \bmod \ell_i$ and $\val(\pi',x_i)<\val(\pi',y_i)$ then
$\pi'$ can be turned into a run $\pi''$ where $\val(\pi'',x_i) =
\val(\pi'',y_i)$ by iterating the loop of length $\ell_i$.

There are, however, some further subtleties that need to be taken care
of. Consider the segment $\pi'$ of $\pi$ between the first transition
labeled by $\wPP$ on $x_i$ and the first transition labeled by $\wPP$ on
$x_{i+1}$. If $\pi'$ contains no loop then we are in a situation where
the first loop incrementing $x_i$ is also the first loop incrementing
$x_{i+1}$. This means that the values of $x_i$ and $x_{i+1}$ get
paired together, and hence, for an accepting run, also the values of
$y_i$ and $y_{i+1}$ are paired together. In our approach, we deal with
such circumstances by introducing so-called \emph{$y$-constraints}. A
$y$-constraint of the form $y_i-y_{i+1}=\delta_i$ for some constant
$\delta_i \in \N$ asserts that the counters $y_{i+1}$ and $y_i$ must
eventually have constant difference $\delta_i$ along a run.

Otherwise, if $\pi'$ above contains a loop, the difference between the
values of $x_i$ and $x_{i+1}$ is not necessarily constant, but
lower-bounded by the length $\delta_i$ of the loop-free sub path of
$\pi'$. Thus, in an accepting run, the difference between $y_i$ and
$y_{i+1}$ must also be at least $\delta_i$, which is asserted by a
$y$-constraint of the form $y_i-y_{i+1} \ge \delta_i$.

\subsection{An abstraction for restricted \laVASS}

Our decision procedure for emptiness of restricted \laVASS is based on
reducing this problem to a reachability problem in a
carefully designed finite-state abstraction of the state-space of
\laVASS. Throughout this section, let $V=\langle Q,2d,
\Sigma,\Delta,\lambda,q_0,F,\Phi\rangle$ be a restricted \laVASS.  We
first define the state space of the abstracted \laVASS.
\begin{definition}
  An abstract configuration is a tuple
  \begin{multline*}
  \alpha=(q,
  m_1,n_1,\dots,m_{d},n_{d},u_1,u_2,\dots,u_{d-1},\ell_1,\dots,\ell_d)\\
  \in Q\times (\N\cup\{\bot\})^{2d} \times \N^{d-1}\times(\N \cup \{ \top \})^{d},
  \end{multline*}
  such that $m_i,n_i \in [0,2dM_i] \cup \{ \bot \}$
  and $u_i \in [0,U_i]$ and $\ell_i\in[0,M_i-1] \cup \{\top\}$,
  where
  \begin{itemize}
	\item $M_i \defeq \floor{\abs Q^{((1/8)\cdot 32^{i-1} + 1)}}$; and
	\item $U_i \defeq \abs Q^{(32^{i-1}+4)}$.
  \end{itemize}
\end{definition}
The idea is that $m_i,n_i$ store the residue classes modulo $\ell_i$
of the counter pair $x_i,y_i$ respectively, where the value $\top$ for
$\ell_i$ acts as an indicator that we are storing actual values and
not residue classes. The value $\bot$ for some $x_i$ or $y_i$
indicates that the counter has not yet been initialised. If for an
update function $f$, $f=\wPP$ or $f=\wTTPO$ then $f(\bot)\defeq
1$; otherwise $f(\bot) \defeq \bot$, and we stipulate that $\bot \bmod
n = \bot$. The value of $u_i$ in an abstract configuration carries the
current difference between the value of the counters $y_{i}$ and
$y_{i+1}$. This difference is potentially unbounded, however for our
purposes it suffices to only store its value if it is less than $U_i$,
and to indicate the fact that it is at least $U_i$ by the value
$u_i=U_i$.

We denote the (finite) set of all abstract configurations of $V$ by
$A(V)$. Let us now define a transition relation
${\xrightarrow{\cdot}}\subseteq A(V) \times \Delta \times A(V)$ such
that $\alpha\xrightarrow{t}\alpha'$, $t=(q,a,q')\in \Delta$ if and
only if:
\begin{itemize}
    \item $\alpha=(q,m_1,n_1,\dots,u_{d-1},\ell_1,\dots,\ell_d)$ and $\alpha'=(q',m_1',n_1',\dots,u_{d-1}',\ell_1,\dots,\ell_d)$;
    \item $\lambda(t)=(f_{x_1},f_{y_1},\dots,f_{x_d},f_{y_d})$;
    \item $f_{x_i} = \wPP$ for all $i$ such that $m_i\neq \bot$;
    \item if $\ell_i\neq\top$, $m_i' = f_{x_i}(m_i)\bmod \ell_i$ and $n_i'=f_{y_i}(n_i)\bmod \ell_i$;
    \item if $\ell_i=\top$, $m_i' = f_{x_i}(m_i)$ and $n_i'=f_{y_i}(n_i)$; and
    \item for all $i\in\{1,\dots,d-1\}$,
      $$u_i' =
    \begin{cases*}
      min(2u_i+1,U_i) & if $f_{y_{i}}=\wTTPO, f_{y_{i+1}}=\wTT$ \\
      min(2u_i,U_i) & if $f_{y_{i}}=f_{y_{i+1}}$ \\
      min(2u_i-1,U_i) & if $f_{y_{i}}=\wTT, f_{y_{i+1}}=\wTTPO$. \\
    \end{cases*}
    $$
\end{itemize}
Assuming that the value of $y_i$ is at least the value of $y_{i+1}$,
which we will always ensure, the definition of how to update $u_i$
ensures that it exactly stores the difference $y_i-y_{i+1}$ unless the
difference becomes too large, in which case it is levelled off at
$U_i$.

An abstract configuration path is a sequence of abstract
configurations and transitions of the form
$R=\alpha_1\xrightarrow{t_1} \alpha_2 \xrightarrow{t_2} \cdots
\xrightarrow{t_{n-1}} \alpha_n$.

Given two consecutive $y$-counters $y_i,y_{i+1}$ and $\delta_i\in \N$,
we say that $y_{i}-y_{i+1}=\delta_i$ and $y_{i}-y_{i+1}\geq \delta_i$
are \emph{$y$-constraints}. Let $Y$ be a set of $y$-constraints, an
abstract configuration
$\alpha=(q,m_1,n_1,\dots,u_1,\dots,u_{d-1},\ell_1,\dots,\ell_d)$
\emph{respects $Y$} whenever
\begin{itemize}
\item
  $u_i\geq \delta_i$
  for all constraints of type $y_{i}-y_{i+1} \geq \delta_i$ in $Y$,
\item and $u_i<U_i$ and $u_i=\delta_i$ for all constraints
  $y_{i}-y_{i+1} = \delta_i$ in $Y$.
\end{itemize}
We say that $\alpha_f$ is a \emph{final abstract configuration
  respecting $Y$} whenever $q\in F$, $m_i=n_i$ for all $1\leq i\leq
d$, and $\alpha_f$ respects $Y$.

\subsection{Witnessing certificates}

While any concrete accepting run of an \laVASS gives rise to an
abstract configuration path ending in an accepting abstract
configuration, the converse does not hold. This motivates the
introduction of reachability and witnessing certificates, which are
special abstract configuration paths that carry further information
that eventually enables us to derive from a witnessing certificate a
concrete accepting run of an \laVASS.

A \emph{reachability certificate} is a tuple $(R,X,Y,L)$ such that
$R=\alpha_1\xrightarrow{t_1} \alpha_2 \xrightarrow{t_2} \cdots
\xrightarrow{t_{n-1}} \alpha_n$ is an abstract configuration path, and
$X,Y,L\colon \{1,\ldots,d\} \to \{1,\ldots,n\}$. Here, $X(i)$ and
$Y(i)$ indicate the position where the $x_i$-counter and $y_i$-counter
obtain a value different from $\bot$ for the first time. Moreover,
$L(i)$ is the position where a loop of length $\ell_i$ can be found.
Formally, $(R,X,Y,L)$ is required to have the following properties:
\begin{enumerate}[(a)]
\item $\alpha_1 = (q_0,\bot, \ldots, \bot,0,\ldots,0,
  \ell_1,\ldots,\ell_d)$ and if $\ell_i=\top$ then $\ell_j=\top$ for
  all $i<j\le d$;
\item $\lambda(x_i,t_{X(i)-1})=\wPP$ and $\lambda(y_i,t_{Y(i)-1})=\wTTPO$ for all $1\le i\le d$;
\item $\lambda(x_i,t_{j})=\wID$ for all $1\le j < X(i)-1$;
\item $\lambda(y_i,t_{j})=\wTT$ for all $1\le j < Y(i)-1$;
\item $X,Y,L$ are monotonic;
\item for all $1\le i\le d$, if $\ell_i\neq \top$ then
  \begin{itemize}
  \item $X(i)\le L(i)< Y(i)$; and
  \item there is a simple $\alpha_{L(i)}$-loop $\alpha_{L(i)}
    \xrightarrow{t_1'} \alpha'_2 \xrightarrow{t_2'} \cdots
    \xrightarrow{t_{\ell_i-1}} \alpha'_{\ell_i-1}
    \xrightarrow{t_{\ell_i}'} \alpha_{L(i)}$ of length $\ell_i$.
  \end{itemize}
\end{enumerate}
Those conditions can be interpreted as follows. Condition~(a) asserts
that the certificate starts in an initial abstract configuration. We
require that $\top$ monotonically propagates since since the absence of
a loop for counter $x_i$ implies that the remainder of a path is
short, hence we can afford to subsequently store actual counter
values and not residue classes. Conditions~(b), (c) and~(d) assert
that $X(i)$ and $Y(i)$ are the first position where the counters
$x_i,y_i$ hold value different from $\bot$. Condition~(e) states that
the counters $x_{i+1}$, $y_{i+1}$ do not carry a value different from
$\bot$ before the counters $x_{i}$ and $y_{i}$,
respectively. Condition~(f) implies that, if $\ell_i\neq \top$ then
between the first update for counter $x_i$ and the first update for
counter $y_i$ there is a position $L(i)$ where we can find a loop in
the abstract configurations of length $\ell_i$. Notice that if
$x_j=\bot$ or $y_j=\bot$ in $\alpha_{L(i)}$ then $x_j$ and $y_j$
remain to hold $\bot$ along this loop, i.e., this loop does not update
counters that have not been initialised already.

Given $R$, the set of $y$-constraints induced by $R$ is the smallest
set containing
\begin{itemize}
\item $y_{i}-y_{i+1}\geq \delta_i$, where $\delta_i \defeq
  X(i+1)-X(i)$ if there is a $j$ such that $X(i)\le L(j) < X(i+1)$;
  and
\item otherwise $y_{i}-y_{i+1} = \delta_i$, where $\delta_i \defeq
  X(i+1)-X(i)$,
\end{itemize}
for all $1\le i< d$ such that $\ell_i\neq\top$.

We introduce some further notation. Given a reachability certificate
$R$, we denote by $\pi(R)$ the run corresponding to $R$ in the
configuration graph of $V$, with the initial configuration
$(q_0,0,0,\ldots,0,0)$. Given indices $1\le i\le j\le n$, we denote by $R[i,j]$ the
segment $\alpha_i \xrightarrow{t_i} \alpha_{i+1} \cdots
\xrightarrow{t_{j-1}} \alpha_j$ of $R$, and by $R[i]\defeq \alpha_i$.
We say that $R$ is a \emph{witnessing certificate} if, for $a\le d$
being the largest index such that $\ell_a \neq \top$:
\begin{itemize}
\item $R[1,Y(a)]$ is a simple path and $n-Y(a)\le 2dM_{d+1}$;
\item $\alpha_n$ is a final abstract configuration respecting the set
  of induced $y$-constraints; and
\item $\val(\pi(R),x_a) \le \val(\pi(R),y_a)$.
\end{itemize}
Sometimes we will speak of witnessing certificates \emph{restricted}
to a set of counters. By that we mean a witnessing certificates where
the relevant Conditions~(a)--(f) are only required for that set of
counters.

Now we are ready to provide a proof for \Cref{prop:main}, that stated that language emptiness for restricted \laVASS can be decided in $\mathrm{NSPACE}( \abs V \cdot 2^{O(d)})$.
\begin{proof}[Proof of \Cref{prop:main}]
  Clearly, an abstract configuration can be stored in space $\abs V \cdot 2^{O(d)}$. An NEXPSPACE algorithm can hence
  non-deterministically choose an initial configuration and
  non-deterministically verify that it leads to a final abstract
  configuration along a path that is a witnessing certificate. To this
  end, the algorithm computes the set of induced $y$-constraints
  on-the-fly while guessing the reachability certificate, and verifies
  them in the last configuration. Note that the $y$-constraints can be stored in space $\abs V \cdot 2^{O(d)}$. Finally, the requirement
  $\val(\pi(R),x_a) \le \val(\pi(R),y_a)$ can also be verified in
  exponential space since we require that $R[1,Y(a)]$ is a simple path and $n-Y(a)\le 2M_{d+1}$.
\end{proof}
In the next section we argue the correctness of our algorithm by proving the following theorem:
\begin{theorem}\label{theorem:certificate iff non-empty language}
The language of a restricted \laVASS $V$ is non-empty if and only if there exists a witnessing certificate for $V$.
\end{theorem}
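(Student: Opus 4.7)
The plan is to prove the two directions of the biconditional separately, since they have a different flavour: the forward direction is \emph{completeness} of the abstraction (any accepting run induces a certificate), while the backward direction is \emph{soundness} (any certificate can be realised by an accepting run).

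For the \emph{completeness} direction (non-empty language implies witnessing certificate), take an accepting run $\pi=c_0\xrightarrow{t_1}\cdots\xrightarrow{t_n}c_n$, and WLOG assume counters are ordered with $\val(\pi,x_1)\ge\cdots\ge\val(\pi,x_d)>0$. Define $X(i)$ and $Y(i)$ as the first positions along $\pi$ at which $x_i$, respectively $y_i$, receive a non-trivial update. For each $i$ in order, examine the segment $\pi[X(i),Y(i)]$: if it is sufficiently long, a pigeonhole argument on the abstract state space furnishes a simple loop whose length and position we take as $\ell_i$ and $L(i)$; otherwise set $\ell_i=\top$ (which by condition~(a) propagates to all higher indices). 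Let $a$ be the largest index with $\ell_i\neq\top$. The key quantitative observation from the paper's introductory discussion is that once $y_a$ has been initialised via a $\wTTPO$ transition, each subsequent step at least doubles its value, so at most $O(\log\val(\pi,y_a))$ further steps remain before termination; this count can be shown to lie below the prescribed threshold $2dM_{d+1}$. Taking residues modulo $\ell_i$ where appropriate, and contracting the removed loop iterations, yields the abstract path $R$; the remaining certificate conditions and the final respect of the induced $y$-constraints then follow routinely from the definitions.

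For the \emph{soundness} direction (witnessing certificate implies non-empty language), take $(R,X,Y,L)$ and start from the concrete run $\pi(R)$. We build an accepting run by suitably pumping the simple loops at each $L(i)$ for $1\le i\le a$. One iteration of the loop at $L(i)$ increments every already-initialised $x_j$ with $j\le i$ by $\ell_i$, transforms every already-initialised $y_j$ with $j\le i$ by an affine map of the shape $y_j\mapsto 2^{\ell_i}y_j+c_{i,j}$, and leaves every counter with $j\ge i$ at its earlier ($\bot$ or zero) value because $L(i)<Y(i)$. The residue-tracking in the abstraction guarantees $x_i\equiv y_i\pmod{\ell_i}$ in the target configuration, so a suitable number of iterations of $L(i)$ turns this into an exact equality $x_i=y_i$, while the induced $y$-constraints synchronise the iteration counts so that the equalities at different indices remain mutually compatible. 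For $i>a$ no pumping is required: the abstraction stores actual values, and by definition of a witnessing certificate the tail $R[Y(a)+1,n]$ is already short.

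The hard part will be making the soundness argument go through. Iterations of the loops $L(1),\ldots,L(a)$ interact non-trivially, and one has to establish a joint choice of iteration counts $k_1,\ldots,k_a$ that simultaneously realises all $d$ equalities $x_i=y_i$. This is precisely where the $y$-constraints earn their keep, as they abstract the rigid synchronisation pattern between adjacent $y$-counters that forces the $k_i$'s to fit together; and where the exponentially-growing thresholds $M_i$ and $U_i$ come in, with $M_i$ bounding the residue moduli we need to guess and $U_i$ governing when a difference $y_i-y_{i+1}$ is ``large enough'' to be safely increased further. That such an argument works at all rests crucially on the restricted syntactic shape of \laVASS (Properties~(i)--(iv)); as noted in the preliminaries, relaxing any of them lands one in undecidable territory.
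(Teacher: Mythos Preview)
Your high-level decomposition into completeness and soundness matches the paper, and your completeness sketch is broadly on target, though you elide two steps the paper carries out explicitly: one must first excise loops from $R[1,Y(a)]$ to make it simple, and in the case $a=d$ one must \emph{also} contract loops in the tail $R[Y(d),n]$ to force $n-Y(d)\le 2dM_{d+1}$ (the doubling argument you cite, which becomes Lemma~\ref{lm:smallpath}, only handles $a<d$).

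The soundness direction, however, has a real gap. You assert that one iteration of the loop at $L(i)$ ``leaves every counter with $j\ge i$ at its earlier ($\bot$ or zero) value because $L(i)<Y(i)$''. This is false already for $j=i$ (since $X(i)\le L(i)$ the counter $x_i$ \emph{is} incremented), and more importantly nothing in the certificate forces $L(i)<X(i+1)$: the loop at $L(i)$ may very well sit after $X(i+1)$ and hence increment $x_{i+1},x_{i+2},\ldots$ as well. Your proposed remedy of a ``joint choice of iteration counts $k_1,\ldots,k_a$'' would then have to disentangle all these interactions simultaneously, and you give no mechanism for doing so.

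The paper avoids this entirely by processing the counters \emph{sequentially from $a$ down to $1$}: it constructs runs $\pi_0=\pi(R),\pi_1,\ldots,\pi_a$ maintaining the invariant that in the final configuration of $\pi_i$ one has $m_j=n_j$ for $j>a{-}i$ and $m_j\le n_j$, $m_j\equiv n_j\bmod\ell_j$ for $j\le a{-}i$. The step from $\pi_{i-1}$ to $\pi_i$ pumps only the single loop at $L(a{-}i)$, and the crux---which your plan is missing---is a dichotomy supplied by the induced $y$-constraints. If the constraint for the pair $(a{-}i,a{-}i{+}1)$ is an equality $y_{a-i}-y_{a-i+1}=\delta_{a-i}$, one shows directly that $m_{a-i}=n_{a-i}$ already holds in $\pi_{i-1}$, so no pumping is needed at all. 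Otherwise the constraint is an inequality, which by definition means some $L(j)$ lies in $[X(a{-}i),X(a{-}i{+}1))$, and one argues the pumped loop sits before $X(a{-}i{+}1)$; hence the already-fixed counters $x_{a-i+1},\ldots,x_d,y_{a-i+1},\ldots,y_d$ are untouched. This is precisely how the $y$-constraints ``earn their keep'', and it is a much simpler mechanism than the simultaneous system you envisage.
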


\section{Correctness proof of the certificate}
In this section we prove \Cref{theorem:certificate iff non-empty
language}. The proof is split into the two directions. In
\Cref{ssec:certificate-to-reachability} below, we show that the existence of a witnessing
certificate for an \laVASS implies that the language of the \laVASS is
non-empty. The converse direction is then shown
in \Cref{ssec:non-empty-to-certificate}.

\subsection{Witnessing certificates imply language non-emptiness}\label{ssec:certificate-to-reachability}

This section proves the following proposition.
\begin{proposition}\label{lem:abstract-to-concrete}
  If there exists a witnessing certificate for a restricted \laVASS
  $V$ then $L(V)\neq \emptyset$.
\end{proposition}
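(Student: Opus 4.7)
The plan is to start from the concrete run $\pi_0 = \pi(R)$ arising from executing the transition sequence of the abstract path, and then modify it by iterating the loops witnessed by $L$ until each pair of counters $(x_i, y_i)$ has equal final values. I would first establish a simulation lemma: the abstract configurations along $R$ correctly track the concrete counter values of $\pi_0$, in the sense that $m_i, n_i$ are the residues of $\val(\pi_0, x_i), \val(\pi_0, y_i)$ modulo $\ell_i$ (or their exact values when $\ell_i = \top$) and $u_i$ tracks $\val(\pi_0, y_i) - \val(\pi_0, y_{i+1})$ capped at $U_i$. This is proved by a straightforward induction on the length of the prefix, following the definition of the abstract transition relation $\xrightarrow{t}$. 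As a consequence, the final abstract configuration $\alpha_n$ being accepting yields $\val(\pi_0, x_i) \equiv \val(\pi_0, y_i) \pmod{\ell_i}$ for $i \le a$, the equality $\val(\pi_0, x_i) = \val(\pi_0, y_i)$ for $i > a$, and the induced $y$-constraints hold on the concrete $y$-values.

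Next I would show that the discrepancies $\Delta_i = \val(\pi_0, y_i) - \val(\pi_0, x_i)$ are non-negative multiples of $\ell_i$ for every $i \le a$, by downward induction on $i$. The base case $i = a$ is the witnessing condition $\val(\pi(R), x_a) \le \val(\pi(R), y_a)$. For the inductive step, since $x$-counters are pure step counters after initialisation, $\val(\pi_0, x_i) - \val(\pi_0, x_{i+1}) = X(i+1) - X(i) = \delta_i$; combining this with the induced $y$-constraint $\val(\pi_0, y_i) - \val(\pi_0, y_{i+1}) \ge \delta_i$ and the inductive hypothesis $\val(\pi_0, y_{i+1}) \ge \val(\pi_0, x_{i+1})$ gives $\Delta_i \ge 0$.

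The heart of the proof is then to iterate the loop at $L(i)$ a non-negative integer number of times $k_i$, for each $i \le a$, so that the resulting run $\pi^*$ satisfies $\val(\pi^*, x_i) = \val(\pi^*, y_i)$ for every $i$. The driving structural observation is that pumping $L(i)$ does not change the eventual value of any $y_j$ with $j \ge i$, because $Y(j) \ge Y(i) > L(i)$ forces $y_j$ to remain $\bot$ throughout the loop and therefore leaves its first $\wTTPO$ transition and all subsequent updates intact. Pumping $L(i)$ does, however, add exactly $k_i \ell_i$ to the eventual value of each $x_j$ with $X(j) \le L(i)$, since such $x_j$ simply increments once per transition. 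Setting up the full system of equations $\val(\pi^*, x_j) = \val(\pi^*, y_j)$ for $j \le a$ and solving it (proceeding in the order $i = a, a-1, \ldots, 1$) yields the $k_i$'s, where the divisibility $\ell_i \mid \Delta_i$ and the non-negativity of the discrepancies certify that each $k_i \in \N$.

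The main obstacle is that pumping $L(i)$ may affect $y_j$ for $j < i$ with $Y(j) \le L(i)$: each loop iteration transforms $y_j$ by an affine map of the form $y_j \mapsto 2^{\ell_i} y_j + b_{i,j}$, and these transformations interact nontrivially with the pumps at $L(j)$ and with the tail of the run applied to $y_j$ after the pumped block. Handling this cleanly requires tracking the cumulative effect of all pumps on each $y_j$ as a single affine expression in the $k_i$'s, and invoking the induced $y$-constraints (in particular the equality-type constraints $y_i - y_{i+1} = \delta_i$, which arise precisely when no loop sits between consecutive $X$'s) together with the monotonicity of $X, Y, L$ to show that the combined system still admits a non-negative integer solution. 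Once the $k_i$'s are fixed, the modified run is manifestly valid---all updates $\wID, \wPP, \wTT, \wTTPO$ preserve non-negativity, every pumped segment is a legitimate loop by condition~(f) of the certificate, and $\pi^*$ terminates in the same final control state as $\pi_0$ with $\val(\pi^*, x_i) = \val(\pi^*, y_i)$ for every $i$, witnessing $L(V) \ne \emptyset$.
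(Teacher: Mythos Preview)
Your overall plan---take the concrete run $\pi(R)$ and successively pump the loops at $L(a),L(a-1),\ldots,L(1)$---is exactly the paper's strategy, and your base-case reasoning (the simulation lemma and the downward induction giving $\Delta_i\ge 0$ from the $y$-constraints) matches what the paper does.

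The gap is in the inductive step. You correctly note that pumping at $L(i)$ adds $k_i\ell_i$ to every $x_j$ with $X(j)\le L(i)$, but you do not deal with the consequence that this set may contain indices $j>i$. Those are precisely the counters you have \emph{already} equalised, and a further pump at $L(i)$ would increase $x_j$ without touching $y_j$ (since $Y(j)\ge Y(i)>L(i)$), destroying the equality $x_j=y_j$. This, and not the effect on $y_j$ for $j<i$, is the real obstruction to the greedy back-to-front procedure; the counters $y_j$ with $j<i$ are processed later and in the meantime only need the weaker invariant $m_j\equiv n_j\pmod{\ell_j}$ together with $m_j\le n_j$.

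The paper closes this gap with a dichotomy you gesture at but do not state: if at the stage for counter $a-i$ one actually needs to pump (that is, $m_{a-i}\ne n_{a-i}$ after the previous stages), then necessarily $L(a-i)<X(a-i+1)$. The argument is that otherwise the induced $y$-constraint for index $a-i$ is the \emph{equality} $y_{a-i}-y_{a-i+1}=\delta_{a-i}$; combining it with $m_{a-i+1}=n_{a-i+1}$ (already established) and with $m_{a-i}-m_{a-i+1}=\delta_{a-i}$ forces $m_{a-i}=n_{a-i}$, so no pump was needed after all. Once $L(a-i)<X(a-i+1)$ is secured, monotonicity of $X$ gives $L(a-i)<X(j)$ for every $j>a-i$, so the pump leaves all previously fixed counters untouched and the induction goes through. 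Your final paragraph names the equality-type constraints, but attaches them to the wrong sub-problem.

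A smaller point: the claim that the cumulative effect of the pumps on $y_j$ is ``a single affine expression in the $k_i$'s'' is false. One pass through the loop at $L(i)$ acts on an already-initialised $y_j$ as $y_j\mapsto 2^{\ell_i}y_j+b$, so $k_i$ passes contribute a factor $2^{k_i\ell_i}$, which is exponential in $k_i$. With the dichotomy in place this does not matter: for the not-yet-processed counters only the residue of $y_j$ modulo $\ell_j$ is relevant at the intermediate stages, and that is preserved because the inserted block is a loop in the abstract configuration space.
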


The idea behind the proof of \Cref{lem:abstract-to-concrete} is that
we obtain from a witnessing certificate $(R,X,Y,L)$ of an \laVASS $V$
a sequence of runs of $V$ such that the final run in that sequence is
an accepting run of $V$. Initially, we obtain a run that ends in a
configuration where the counters are in a congruence relation. We then
carefully pump the simple loops pointed to by $L$, beginning from the
last counter working towards the first.

To formally prove \Cref{lem:abstract-to-concrete}, let $(R,X,Y,L)$ be
a witnessing certificate, and let $\pi(R)$ be the run in the
configuration graph of $V$ induced by $R$. Let $a\le d$ be maximal
such that $\ell_a\neq \top$. We now define a sequence of runs
$\pi_0,\ldots,\pi_a$ such that the following invariant holds. In the final
configuration of $\pi_i$,
\begin{enumerate}[(i)]
\item $m_j \le n_j$ and $m_j\equiv n_j \bmod \ell_j$ for the $j$-th
  counter pair, $1\le j \le a-i$; and
\item $m_j = n_j$ for the $j$-th counter pair, $a-i<j\le d$.
\end{enumerate}
It is clear that $\pi_a$ then witnesses $L(V)\neq \emptyset$. We
proceed by induction on $i$.

\emph{Base case $i=0$:} Let $\pi_0=\pi(R)$. Since $R$
is a witnessing certificate, $val(\pi(R),x_a)\leq val(\pi(R),y_a)$, and hence
$m_a \le n_a$ in the last configuration of $\pi_0$.
Moreover, $R$ respects the set of induced $y$-constraints. Hence
$n_{a-1} - n_a \ge \delta_{a-1}$, where $\delta_{a-1}$ is the length
of the path from $R[X(a-1)]$ to $R[X(a)]$. Hence $n_{a-1} - n_a \ge
m_{a-1} - m_a$ and thus $m_{a-1} \le n_{a-1}$. Iterating this argument
for the remaining counters, we get that~(i) of the invariant is
fulfilled for $\pi_0$; (ii) trivially holds since $R$ ends in an
accepting abstract configuration.

\emph{Induction step  $i>0$:} Let 
$\pi_{i-1}$ be the path that exists by the induction hypothesis. If
$m_{a-i}=n_{a-i}$ in the last configuration of $\pi_{i-1}$ then we are
done and take $\pi_i=\pi_{i-1}$; otherwise $m_{a-i} < n_{a-i}$ and
$m_{a-i} \equiv n_{a-i} \bmod
\ell_{a-i}$. Hence, there is some $k\in \N$ such that $n_{a-i}= k
\cdot \ell_{a-i}$. Since $\ell_i\neq\top$, let
$\beta\defeq\alpha_{L(a-i)} \xrightarrow{t_1} \alpha_2
\xrightarrow{t_2} \cdots
\alpha_{\ell_i-1}\xrightarrow{t_{\ell_i}}\alpha_{L(a-i)}$ be the
simple $\alpha$-loop at position $L(a-i)$ that is guaranteed to exist
since $R$ is a witnessing certificate. We insert the transitions of
$\beta^k$ and the induced updated configurations into $\pi_{i-1}$ at
position $L(a-i)$. Notice that $L(a-i) < X(a-i+1)$. Otherwise, by the
definition of the induced $y$-constraints,
$y_{a-i}-y_{a-i+1}=\delta_{a-i}$ is in the set of induced
$y$-constraints, where $\delta_i= X(a-i+1)-X(a-i)$. Since the last
abstract configuration of $R$ respects the set of $y$-constraints, it
must be the case that in the last configuration of $\pi_{i-1}$,
$n_{a-i}-n_{a-i+1}= \delta_{a-i}$ and
$m_{a-i}-m_{a-i+1}=\delta_{a-i}$, so $m_{a-i}=n_{a-i}$, because after
the position $X(a-i+1)-1$ in $R$ and thus $\pi_{i-1}$, the counters
$x_{a-i},x_{a-i+1}$ get incremented simultaneously. This contradicts
our assumption that $m_{a-i}\neq n_{a-i}$. Thus, the counters
$x_{a-i+1},y_{a-i+1},\ldots,x_d,y_d$ remain unchanged by the insertion
of $\beta^k$, so (ii) and consequently~(i) continues to hold in the
last configuration of $\pi_i$ for those counters. Moreover, due to the
ordering conditions imposed on witnessing certificates, the value of
$y_{a-i}$ does not change either, and hence $m_{a-i}=n_{a-i}$ in the
last configuration of $\pi_i$. Since $\beta$ is a loop in the abstract
configuration space, we have $m_j \equiv n_j \bmod \ell_j$ for all
$1\le j< a-i$ and the values of $u_j$, for all $1\le j<a$ are
preserved.

\subsection{Reachability yields witnessing certificates}\label{ssec:non-empty-to-certificate}

We now turn towards the converse direction and show that we can obtain
a witnessing certificate from a run witnessing non-emptiness. 
\begin{proposition}\label{lem:concrete-to-abstract}
  If a restricted \laVASS $V$ admits an accepting run then
  there exists a witnessing certificate for $V$.
\end{proposition}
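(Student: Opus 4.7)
The plan is to start with an accepting run $\pi = c_0 \xrightarrow{t_1} c_1 \cdots \xrightarrow{t_n} c_n$ of $V$ and distil from it the required data $(R, X, Y, L)$. First I would invoke the WLOG convention from \Cref{sec:emptiness-rlavass} to relabel counter pairs so that $\val(c_n, x_1) \ge \cdots \ge \val(c_n, x_d) > 0$. Then $X(i)$ is read off as the first position where $\lambda(x_i, t_{X(i)-1}) = \wPP$, and $Y(i)$ as the first position where $\lambda(y_i, t_{Y(i)-1}) = \wTTPO$. Property~(iii) of restricted \laVASS together with the ordering on counters forces $X$ and $Y$ to be monotonic, and conditions~(b)--(d) of reachability certificates are then immediate from these definitions.

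To pick the loop data, I would process counters from $i = 1$ upwards and inspect $\pi[X(i), Y(i)]$. If this segment contains two positions with the same control state, a simple concrete cycle of length $\ell_i$ sits inside it at some position $L(i) \in [X(i), Y(i))$; take those as the loop data. Because $L(i) < Y(i) \le Y(j)$ for all $j \ge i$, the counters $y_j$ with $j \ge i$ remain at value $0$ throughout the cycle, so it projects to a genuine loop in the abstract space and condition~(f) holds. If instead $\pi[X(i), Y(i)]$ is control-simple, set $\ell_i = \top$; its length is then at most $\abs Q$, and since $y_i$ at least doubles after $Y(i)$ while $x_i$ only increments, the accepting condition $x_i = y_i$ forces $n - Y(i) = O(\log \val(c_n, x_i))$. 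A routine calculation shows that once $\ell_i = \top$ the segment $\pi[X(j), Y(j)]$ is control-simple for all $j > i$ as well, giving the required $\top$-monotonicity in condition~(a).

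The crux is the shortening step. Let $a$ be the largest index with $\ell_a \neq \top$; we must arrange for the abstract projection to be simple on $[1, Y(a)]$ and for the tail to satisfy $n - Y(a) \le 2dM_{d+1}$. I would iteratively excise any abstract cycle in $R[1, Y(a)]$ that is not one of the designated $L(i)$ loops. Such an excision preserves the abstract state, hence the residues $m_i, n_i$ modulo $\ell_i$ of all concrete counters with $i \le a$, and it also preserves the $u_j$ values since they depend only on the sequence of alphabet symbols. For $i > a$, where actual values of $x_i, y_i$ must be retained, the $\top$-monotonicity of the previous step gives enough control over the relative positions of $Y(a)$ and $X(i)$ so that excised sub-paths do not touch segments where $x_i, y_i$ are already initialised. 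The tail bound then follows from the exponential-growth argument above, together with the fact that after shortening the prefix has length polynomial in the number of abstract configurations restricted to the $a$ active counter pairs, for which the thresholds $M_i$ are calibrated.

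Finally, the last abstract configuration $\alpha_n$ must respect the induced $y$-constraints; since $y_i = x_i$ at acceptance, the difference $y_i - y_{i+1}$ equals $x_i - x_{i+1} = X(i+1) - X(i)$, which matches the $=$ form of the constraint when no $L(j)$ sits between $X(i)$ and $X(i+1)$, and is at least that length otherwise. The remaining requirement $\val(\pi(R), x_a) \le \val(\pi(R), y_a)$ follows from the WLOG normalisation together with preservation of acceptance through shortening. The hard part is really the shortening step in the previous paragraph: making the excisions simultaneously compatible with all entries of an abstract configuration (the residues $m_i, n_i$, the gaps $u_i$, and the fixed $\ell_i$), with the designated loops at each $L(i)$, and with the accepting final configuration---this is precisely what the polynomial bounds $M_i, U_i$ are designed to absorb.
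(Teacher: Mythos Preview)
Your overall architecture matches the paper's, but the loop-finding step has a genuine gap. You propose to locate $L(i)$ by looking for a repeated \emph{control state} in $\pi[X(i),Y(i)]$ and then argue that ``it projects to a genuine loop in the abstract space'' because the $y_j$ with $j\ge i$ are still zero. That argument ignores the other components of an abstract configuration: for $j<i$ the residues $m_j,n_j$ (taken modulo $\ell_j$) and the difference trackers $u_j$ are already live at position $L(i)$. A control-state cycle of length $\ell_i$ increments each $x_j$ by $\ell_i$, so $m_j$ returns only if $\ell_j\mid\ell_i$; the $n_j$ and $u_j$ have no reason to return at all. Hence a control-state cycle is in general \emph{not} an $\alpha_{L(i)}$-loop in the sense of condition~(f), and your certificate fails to be a reachability certificate.

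The paper closes this gap by an essentially different construction: it builds a sequence $R_0,R_1,\ldots,R_d$ of abstract projections, where $R_{i-1}=f_V(\pi,\ell_1,\ldots,\ell_{i-1},1,\ldots,1)$ already has the moduli $\ell_1,\ldots,\ell_{i-1}$ fixed, and then searches for a repeated \emph{abstract configuration} in $R_{i-1}[X(i),Y(i)]$ via pigeonhole. The number of possible abstract configurations on that segment is bounded by $N_i \le \abs Q\prod_{j<i} M_j^2 U_j$, and the thresholds $M_i$ are calibrated precisely so that $N_i<M_i$; this is why $M_i$ grows like $\abs Q^{c\cdot 32^{i-1}}$ rather than just $\abs Q$. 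Your control-state criterion would only give $\ell_i<\abs Q$, which is far too small to make the later shortening and $y$-constraint bounds go through.

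A secondary issue: your $\top$-monotonicity claim (``once $\ell_i=\top$ the segment $\pi[X(j),Y(j)]$ is control-simple for all $j>i$'') does not follow. A short segment can still repeat control states. The paper does not prove any such propagation; it simply \emph{decrees} $\ell_j=\top$ for $j>i$ and uses \Cref{lm:smallpath} to show that the concrete values $m_j,n_j$ then fit within the allowed range $[0,2dM_j]$.
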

We begin with defining a function that turns a configuration from
$C(V)$ into an abstract configuration. This function is parameterised
by $\ell_1,\ldots,\ell_d\in \N_+ \cup \{ \top \}$:
\begin{multline*}
f_V((q,m_1,n_1,\dots,m_d,n_d),\ell_1,\dots,\ell_d)\defeq 
(q,m_1\circ \ell_1,n_1\circ \ell_1,\dots,m_d\circ \ell_d,n_d\circ \ell_d,\\
min(n_1-n_2,U_1),\dots,min(n_{d-1}-n_{d},U_{d-1}),\ell_1,\dots,\ell_d)\,.
\end{multline*}
Here, $m\circ \ell\defeq\bot$ if $m=0$; $m\circ \ell\defeq
m\bmod \ell$ if $\ell\in\N_+$; and $m\circ \ell\defeq m$ if
$\ell=\top$.  We lift the definition of $f_V$ to paths of concrete
runs $\pi$ in the natural way, and write
$f_V(\pi,\ell_1,\dots,\ell_d)$ for the resulting sequence of abstract
configurations. Let $\pi=c_1 \xrightarrow{t_1}
c_2 \cdots \xrightarrow{t_{n-1}} c_n$ be a run witnessing
$L(V)\neq\emptyset$. We show how to obtain a witnessing certificate
$R$ from $\pi$. Without loss of generality, in $c_n$ we have $m_1\ge
m_2 \ge
\ldots m_d>0$.

To this end, we show how from the accepting run $\pi$ we can
iteratively define a sequence $R_0,R_1,R_2\ldots,R_d$ of abstract runs
and identify the required $\ell_1,\ldots, \ell_d \in
\N_+ \cup \{ \top \}$ and $X,Y,L$ such that $(R_d,X,Y,L)$ is a reachability
certificate. Let $X(i)\defeq j$ such that $j$ is the first position in
$\pi$ where the value of counter $x_i$ is non-zero; analogously define
$Y(i)$ to be the first position where the value of $y_i$ is
non-zero. Clearly, $X,Y$ are monotonic and $X(i)\le Y(i)$, for all
$1\le i\le d$. Otherwise, if a counter $y_i$ gets initialised before
the counter $x_i$ in $\pi$, it must be the case that $n_i>m_i$ in
$c_n$ and therefore $c_n$ cannot be an accepting configuration.

Recall that $\pi$ has length $n$. In our proof, the subsequent
technical lemma will allow us to conclude that, if for a counter pair
$x_i, y_i$ the $y_i$ counter gets updated shortly after the $x_i$
counter then the run will end shortly after and counter pairs
$x_j,y_j$ for $j\ge i$ will consequently have small values.
\begin{lemma}\label{lm:smallpath}
If $Y(i)-X(i)\le dM_i$ for some $1\le i\le d$ then $n-Y(i)< dM_i$, so
$m_j,n_j\le 2dM_i$ in $c_n$ for all $i\le j\le d$.
\end{lemma}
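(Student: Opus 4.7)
My plan is to exploit two structural features of restricted \laVASS: the $x$-counters grow arithmetically while the $y$-counters grow exponentially once initialised. First, I would use Property~(iii) to observe that once $x_i$ becomes nonzero at position $X(i)$ (taking value $1$), it is incremented at every subsequent transition; hence in $c_n$ we have $m_i = n - X(i) + 1$. Analogously, Property~(ii) tells us every operation on $y_i$ is $\wTT$ or $\wTTPO$, so once $y_i$ takes its first nonzero value $1$ at position $Y(i)$, its value at least doubles at each subsequent step; hence $n_i \ge 2^{n - Y(i)}$ in $c_n$.

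Second, since $\pi$ is accepting, $m_i = n_i$ in $c_n$, giving
\[
2^{n - Y(i)} \le n - X(i) + 1 = (n - Y(i)) + (Y(i) - X(i)) + 1.
\]
Writing $s \defeq n - Y(i)$ and using the hypothesis $Y(i) - X(i) \le dM_i$, this becomes $2^s \le s + dM_i + 1$. A direct arithmetic check will then show that this inequality forces $s < dM_i$, which establishes the first part of the claim, namely $n - Y(i) < dM_i$.

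Third, for the bound on the subsequent counters, I would use monotonicity of $X$: since $X(j) \ge X(i)$ for every $i \le j \le d$, we get
\[
m_j = n - X(j) + 1 \le n - X(i) + 1 \le (n - Y(i)) + dM_i + 1 \le 2dM_i,
\]
and since $c_n$ is accepting, $n_j = m_j$, so the same bound applies to $n_j$.

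The main obstacle I anticipate is the numerical step $2^s \le s + dM_i + 1 \implies s < dM_i$, which must be verified by a small case analysis; the paper's blanket assumption $|Q|\ge 2$ (and therefore $M_i \ge 2$) should make this go through uniformly. Everything else follows directly from Properties~(ii) and~(iii) of restricted \laVASS and from the definitions of $X$ and $Y$.
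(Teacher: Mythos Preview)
Your proposal is correct and follows essentially the same route as the paper: bound $x_i$ linearly by $n-X(i)+1$, bound $y_i$ exponentially by $2^{n-Y(i)}$, combine via the acceptance condition $m_i=n_i$ and the hypothesis $Y(i)-X(i)\le dM_i$ to force $n-Y(i)<dM_i$, and then propagate the bound to $m_j,n_j$ for $j\ge i$ using monotonicity of $X$ together with $n_j=m_j$. The paper phrases the numerical step as a contradiction (assume $n-Y(i)\ge dM_i$ and use $2^s>2s+1$ for $s\ge 3$) rather than your direct inequality $2^s\le s+dM_i+1$, but this is purely cosmetic.
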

\begin{proof}
We have that $Y(i)-X(i)\le dM_i$ implies that $val(\pi[1,Y(i)],x_i)\le dM_i+1$, and since 
$val(\pi[1,Y(i)+k],y_i)\ge 2^k$ we get that:
\begin{itemize}
\item $val(\pi,y_i)\ge 2^{n-Y(i)};$ and
\item $val(\pi,x_i)\le dM_i+n-Y(i)+1.$
\end{itemize}
Assume that $n-Y(i)\ge dM_i$. Then, $2^{n-Y(i)} - (dM_i+n-Y(i)+1)> 2^{n-Y(i)} - (2n-2Y(i)+1) >0$, if $n-Y(i)\ge 3$. However, $\pi$ is an accepting path, so $val(\pi,x_i)=val(\pi,y_i)$, and we get a contradiction. Thus, we must have that $n-Y(i)< dM_i$ which implies that $val(\pi,x_i)\le 2dM_i$, so $m_i=n_i\le 2dM_i$ and for any 
$j$, $i< j\le d$, $m_j\le m_i$ and $n_j\le n_i$, so $m_j,n_j< 2M_i$ 
in $c_n$, for all $i\le j\le d$ since $\pi$ is an accepting path.
\end{proof}

Let $R_0 \defeq f_V(\pi,1,1,\ldots,1)$. Note that $R_0$ together with
$X$ and $Y$ as defined above adheres to Conditions~(a)--(e) of
reachability certificates.

Suppose $R_{i-1}$ and $\ell_1,\ldots,\ell_{i-1}$ have been
constructed. If $i>1$, and $L(i-1)\ge X(i)$ or $\ell_{i-1} = \top$
then we choose $\ell_{i} \defeq \ell_{i-1}$, $L(i)=L(i-1)$ and
$R_i \defeq f_V(\pi,\ell_1,\ldots,\ell_{i},1,\ldots,1)$. Otherwise, we
distinguish two cases.
\begin{itemize}
\item $Y(i)-X(i) < dM_i$: we choose $\ell_{i} \defeq \top$ and $L(i)\defeq X(i)$.
\item $Y(i)-X(i) \ge dM_i$: then there is a segment in $R_{i-1}[X(i),Y(i)]$ of length greater than $M_i$ on which no $x$-counter has its first $\wPP$ transition. Let $N_i$ be the maximum number of different abstract configurations on this segment. Since $\ell_{i-1}\neq\top$ we know that $m_j,n_j$ can take at most $M_j$ different values for all $1\le j<i$, as they can either be $\bot$ or a residue class modulo $M_j$. Also, for all $i\le j\le d$ the values of $m_j,n_j$ have a constant value, either 0 or $\bot$,
on this segment, and $u_i=\cdots=u_d=0$ in all abstract configurations
of this segment. So
\begin{align*}
  N_i \le{} & \abs Q \prod_{1\le j<i} M_j^2 \cdot U_j\\
  \le{} & \abs Q \prod_{1\le j<i} \abs Q^{(1/4)\cdot 32^{j-1} + 2 + 32^{j-1} + 4 }\\
  \le{} & \abs Q^{(1/3968)(5\cdot 32^i - 23968)+6i+1}\\
  <{} & \abs Q^{(1/8)\cdot 32^{i-1} + 1}\\
  ={} & M_i 
\end{align*}
By the pigeonhole principle, there is a smallest $k$, $X(i)\le k<
Y(i)$, $\ell < M_i $, and a simple loop
$\alpha_k \xrightarrow{t_k}\cdots \xrightarrow{t_{k+\ell}}\alpha_{k+\ell+1}=\alpha_k$
in $R_{i-1}$. We choose $L(i)\defeq k$, $\ell_i\defeq \ell$ and let
$R_i\defeq f_V(\pi,\ell_1,\ldots,\ell_{i},1,\ldots,1)$.
\end{itemize}
By construction, $(R_d,X,Y,L)$ is a reachability certificate. It
remains to turn it into a witnessing certificate. In particular, this
requires to removes from $R_d$, to ensure that the final segment of
$R_d$ is short, and to establish that $R_d$ is consistent with the
implied $y$-constraints.

Let $R \defeq R_d = f_V(\pi,\ell_1,\ldots,\ell_{d})$ and $a$ be the
largest index such that $\ell_a\neq \top$. In order to make $R$
loop-free, we iterate the following process:
\begin{itemize}
\item identify the first simple loop $\alpha_k \xrightarrow{t_k} \cdots \xrightarrow{t_{k+\ell}}
  \alpha_{k+\ell+1}$ in $R[1,Y(a)]$ and replace it by $\alpha_k$;
  observe that for $I \defeq \{k+1,\ldots,k+\ell\}$, we have $I \cap
  \{ X(i), Y(i), L(i) : 1\le i \le d \} = \emptyset$ since
  $\alpha_{X(i)-1} \xrightarrow{t_{X(i)-1}} \alpha_{X(i)}$ occurring in $R_d$ means
  that $x_i$ has value $\bot$ in $\alpha_{X(i)-1}$ and a value different from $\bot$ in
  $\alpha_{X(i)}$, and thus $\alpha_{X(i)}$ cannot be part of a
  loop; the same argument applies to any $Y(i)$. Finally, since $L(i)$
  was chosen as the index of the first configuration of the first
  cycle appearing after $X(i)$, we have $L(i)\not\in I$ for all $1\le
  i\le d$ as well.
\item update $X,Y,L$ such that for all $i$ such that $X(i)>k$,
  $X(i)\defeq X(i)-\ell$, and analogously $Y(i) \defeq Y(i) - \ell$
  and $L(i) \defeq L(i) - \ell$ for the respective $i$.
\end{itemize}
This process guarantees that $R[1,Y(a)]$ is loop-free. It is easy
to verify that $(R,X,Y,L)$ obtained in this way is a reachability
certificate and that the last abstract configuration of $R$ is
accepting.

We now show that the $y$-constraints induced by $R$ are valid in the
final configuration of $R$. To this end, we first show that for all
$1\le i\le d$ such that $\ell_i\neq\top$, $X(i+1) - X(i) <
U_i$. Consider the simple path $\alpha_{X(i)}
\xrightarrow{t_{X(i)}} \alpha_{X(i)+1} \xrightarrow{t_{X(i)+1}} \cdots \xrightarrow{t_{X(i+1)-1}}
\alpha_{X(i+1)}$. If $Y(i) \ge X(i+1)$ then clearly $X(i+1)-X(i) \le
N_i<U_i$, where $N_i$ is defined as above. Otherwise, there is a
$k\in\N$ such that the path decomposes as
\[
\alpha_{X(i)} \xrightarrow{t_{X(i)}}\cdots \alpha_{Y(i)}\xrightarrow{t_{Y(i)}}\cdots 
\alpha_{Y(i)+k}\xrightarrow{t_{Y(i)+k}}\cdots \xrightarrow{t_{X(i+1)-1}}\alpha_{X(i+1)}
\]
and
\begin{itemize}
\item $u_i=0$ in all abstract states $\alpha_j$ with $X(i)\le j\le
  Y(i)$;
\item $u_i=U_i$ in all abstract states $\alpha_j$ with $Y(i)+k\le j\le
  X(i+1)$; and
\item $k\le \log{U_i}$.
\end{itemize}

Thus, the maximum length of $R[X(i),X(i+1)]$ is bounded by:
\begin{align*}
  & N_i \cdot M_i + \log U_i + N_i \cdot 2M_i\\
  \le {} & 2 \cdot M_i^3 + \log U_i\\
  \le {} & \abs Q^{(3/8)\cdot 32^{i-1}+4} + \abs Q^{5(i-1)+1}+4\abs Q\\
  < {} & \abs Q^{32^{i-1}+4}\\
  = {} & U_i
\end{align*}

We can now show that $R$ respects the induced $y$-constraints. Fix
some $1\le i\le d$ such that $\ell_i\neq \top$. We distinguish two
cases:
\begin{itemize}
\item  There is no $1\le j\le a$ such that $X(i)\le L(j)\le X(i+1)$. Thus, we know that $y_i-y_{i+1}=\delta_i$ is in the set of induced $y$-constraints. Also, $val(\pi,y_i)-val(\pi,y_{i+1}) = val(\pi,x_i)-val(\pi,x_{i+1}) = X(i+1)-X(i) = \delta_i$ since we did not remove any abstract loops on the segment of $R_d$ between the first $\wPP$ update for $x_i$ and the first $\wPP$ update for $x_{i+1}$. Finally, since $\delta_i<U_i$ by the above argument, we conclude that $u_i=\delta_i$ in the last abstract
configuration $R[n]$ of $R$. 
\item Otherwise, $X(i)\le L(i)\le Y(i)$, so $y_i-y_{i+1}\ge\delta_i$ 
is in the set of induced $y$-constraints. However, $val(\pi,y_i)-val(\pi,y_{i+1}) = val(\pi,x_i)-val(\pi,x_{i+1}) \ge X(i+1)-X(i) = \delta_i$ and again because $\delta_i<U_i$ we can conclude that $u_i\ge \delta_i$ in $R[n]$.
\end{itemize}
This establishes that the $y$-constraints are satisfied.  Let $n$ be
the index of the last abstract configuration of $R$. For the final
step, we now argue that $val(\pi(R),x_a)\le val(\pi(R),y_a)$ and $n-Y(a)\le
2dM_{d+1}$. We make a case distinction:
\begin{itemize}
\item $a=d$: Note that $val(\pi,x_d)=val(\pi,y_d)$.  Since we only
  remove loops from $R_d[1,Y(d)]$, we have that $val(\pi(R),x_d)\le
  val(\pi(R),y_d)$. If $n-Y(d)\le 2dM_{d+1}$ we are done with $(R,X,Y,L)$ as
  a witnessing certificate.  Otherwise, assume $n-Y(d)>
  2dM_{d+1}$. This implies that the path $R[Y(d),n]$ must contain at
  least one simple loop. Consider iterating the following process:
\begin{itemize}
\item remove the first simple loop from $R[Y(d),n]$ and update
  $n\defeq n-\ell$, where $\ell$ is the length of the simple loop that
  was removed; and                 
\item stop if $n-Y(d)\le 2dM_{d+1}$.
\end{itemize}
We argue that, $n-Y(d)\ge M_d^2$. Let $R'$ and $n'$ 
be the previous values of $R,n$ before the last iteration. It must be that $n'
>2dM_{d+1}$ and since the length of 
any simple loop of $R'[Y(d)+1,n']$ is bounded by 
$M_{d+1}$, we get that $n-Y(a)\ge M_{d+1}\ge M_d^2$. 
Note that $Y(d)-X(d)\le M_d\cdot \abs{Q}\cdot\prod_{1\le j< d}M_j^2\cdot U_j\le M_d^2$, 
so $val(\pi(R[1,Y(d)]),x_d)\le M_d^2$. It must be then 
the case that $val(\pi(R),x_d)\le val(\pi(R),y_d)$.

\item $a<d$: we know $n-Y(a)\le 2dM_{d+1}$ by \Cref{lm:smallpath}. Moreover, we must have that $val(\pi(R),x_a)\le val(\pi(R),y_a)$ since $val(\pi,x_a)=val(\pi,y_a)$ and we do not remove loops after the counter $y_a$ is incremented. 

\end{itemize}

\section{A decidable fragment of string constraints}

In this section, we show that a certain fragment of string constraints
whose decidability status has been left open in the literature can be
reduced in logarithmic space to generalised Sem{\"e}nov arithmetic,
and is hence decidable in EXPSPACE. This demonstrates an important
application of our results on generalised Sem{\"e}nov arithmetic, with
deep connections to solving string constraints in practice, which has
been one of the motivations for our work.

Let $\Sigma = \{ 0,1\}$. The \emph{theory of enriched string
  constraints} $\TRElnc$ is the first-order theory of the two-sorted
structure \[\langle \Sigma^*,\N; \{w\}_{w \in \Sigma^*}, \cdot, \slen,
\strnum, \{R_i \}_{i\in \N}, 0,1,+\rangle,\] where
\begin{itemize}
\item the binary function $\cdot$ over $\Sigma^*$ is the string
  concatenation operator,
\item the unary function $\slen\colon \Sigma^* \to \N$ returns on
  input $w$ the length $\abs{w}$ of $w$,
\item the unary function $\strnum\colon \Sigma^* \to \N$ on input $u$
  returns $\eval{u}$, and
\item $R_0,R_1,\ldots \subseteq \Sigma^*$ is an enumeration of all
  regular languages.
\end{itemize}
The remaining predicates, constant and function symbols are defined in
their standard semantics.

The above theory was introduced in~\cite{BKMMDNG20}, where an SMT solver addressing some fragments of this theory was defined, implemented, and compared to other state of the art solvers which can handle such string constraints. Extending \cite{BKMMDNG20}, \cite{BDGKMMN23} presents in more details the motivation behind considering this theory and its fragments. More precisely, the authors of \cite{BDGKMMN23} analysed an extensive collection of standard real-world benchmarks of string constraints and extracted the functions and predicates occurring in them. The works \cite{BKMMDNG20,BDGKMMN23} focused on benchmarks that do not contain word equations, and the result of the aforementioned benchmark-analysis produced exactly the four functions and predicates mentioned above: $\slen$, $\strnum$, regular language membership, and concatenation of strings. 

Complementing the practical results of \cite{BKMMDNG20}, \cite{BDGKMMN23} showed a series of theoretical results regarding fragments of $\TRElnc$. In particular, the existential theory of $\TRElnc$ is shown to be undecidable. Moreover, \cite{BDGKMMN23} leaves as an open problem the question whether the existential theories of $\TREln$ and $\TREnc$, which drop the concatenation operator and length function, respectively, are decidable. From these two, the existential theories of $\TREln$ seems particularly interesting, as all instances from the benchmarks considered in the analysis \cite{BDGKMMN23} can be easily translated into a formula from this particular fragment of $\TREln$. Indeed, by the results reported in Table 1.b from \cite{BDGKMMN23},  
no instance contains both concatenation of strings and the $\strnum$ function; moreover, the concatenation of strings, which appears only in formulas involving regular membership predicates and, in some cases, length function, can be easily removed in all cases by a folklore technique called automata splitting (see, e.g., \cite{AACHRRS15}). Therefore, showing that the existential fragment of $\TREln$ is decidable essentially shows that one can decide all the instances from the standard benchmarks analysed in \cite{BDGKMMN23}.

In this paper, we solve this open problem. By a reduction to generalised Sem{\"e}nov arithmetic, we can settle the decidability status of $\TREln$:
\begin{theorem}
  The existential fragment of $\TREln$ is decidable in EXPSPACE.
\end{theorem}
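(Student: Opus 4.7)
The plan is to reduce, in logarithmic space, the existential fragment of $\TREln$ to the existential fragment of generalised Sem{\"e}nov arithmetic, and then invoke \Cref{thm:main} to obtain the EXPSPACE upper bound.

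Every string variable $w$ occurring in an input formula $\Phi$ will be represented by a pair of fresh numeric variables $(v_w, \ell_w)$ intended to equal $\strnum(w)$ and $\slen(w)$, respectively. To enforce that $v_w$ genuinely fits into $\ell_w$ bits, I would introduce a further existentially quantified variable $p_w$ together with the Sem{\"e}nov atoms $p_w = 2^{\ell_w}$ and $v_w < p_w$. String constants $u$ are eliminated by substituting $\eval{u}$ for $v_u$ and $|u|$ for $\ell_u$; every string equality $w_1 = w_2$ becomes $v_{w_1} = v_{w_2} \wedge \ell_{w_1} = \ell_{w_2}$; and every occurrence of $\strnum(w)$ or $\slen(w)$ inside a Presburger atom is replaced by $v_w$ or $\ell_w$. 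These transformations are purely syntactic and leave only the regular-membership atoms $w \in R$ to handle.

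The main technical step is translating each such atom into a single binary regular predicate $R'(v_w, p_w)$ suitable for generalised Sem{\"e}nov arithmetic. Given a DFA $A$ for $R$, I want $R'(v, 2^\ell)$ to hold precisely when $v < 2^\ell$ and the length-$\ell$ binary encoding of $v$ (padded with leading zeros) lies in $L(A)$. The key observation is that, under the msd convention, the encoding of the pair $(v, 2^\ell) \in \N^2$ over $(\{0,1\}^2)^*$ consists of some leading $(0,0)$ pairs, then exactly one $(0,1)$ pair marking the unique high bit of $2^\ell$, and then $\ell$ pairs of the form $(a_i, 0)$ whose first components spell out precisely the length-$\ell$ encoding of $v$. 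An $O(|A|)$-state DFA $A'$ therefore suffices: it loops on $(0,0)$ in a fresh initial state, upon reading $(0,1)$ it jumps to the initial state of $A$, and thereafter it simulates $A$ on the first components of the pairs read while rejecting any pair whose second component is $1$. The resulting language is zero-closed by construction and is therefore a legitimate regular predicate under the convention of the paper.

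Putting everything together produces, in logarithmic space, an existential formula $\Phi'$ of generalised Sem{\"e}nov arithmetic of size polynomial in $|\Phi|$ that is equi-satisfiable with $\Phi$; the EXPSPACE bound then follows from \Cref{thm:main}. The step that requires the most care is the regular-membership translation: one must verify that zero-closure interacts correctly with the $(0,1)$ marker so that $A$ is always simulated on the intended bit string, irrespective of how many leading $(0,0)$ pairs appear in the msd encoding of $(v_w, p_w)$, and that the constraint $p_w = 2^{\ell_w}$ really does force the second component of this encoding to contain a unique $1$ whose position matches $\ell_w$.
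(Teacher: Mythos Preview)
Your reduction is correct, but it encodes strings differently from the paper. The paper represents a string $s$ by the \emph{single} natural number $\eval{1s}$, i.e., it prepends a marker bit $1$ and reads the result as a binary number. Under that encoding, string equality becomes plain numeric equality, a regular constraint $R(s)$ becomes the unary regular predicate $(0^*1R)(s)$, and the atoms $\slen(s,x)$ and $\strnum(s,x)$ are rendered as $2^x\le s<2^{x+1}$ and $\exists y.\,2^y\le s<2^{y+1}\wedge x=s-2^y$, respectively. Your encoding instead keeps the pair $(v_w,\ell_w)=(\strnum(w),\slen(w))$ explicitly, which makes $\strnum$ and $\slen$ trivial substitutions but pushes the work into a \emph{binary} regular predicate $R'(v_w,p_w)$ together with the auxiliary constraints $p_w=2^{\ell_w}$ and $v_w<p_w$. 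Both routes need one exponential atom per string variable and both yield a polynomial-size existential formula of generalised Sem{\"e}nov arithmetic, so the complexity consequences are identical; the paper's version is slightly leaner (one numeric variable per string, unary regular predicates), while yours makes the length/value bookkeeping more transparent and handles negated string equality without any extra thought. Your verification that the $(0,1)$ marker in $A'$ interacts correctly with zero-closure is the right thing to check and goes through as you describe.
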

%

Again, we treat $\TREln$ as a relational structure. Without loss of
generality, we may assume that atomic formulas of $\TREln$ are one of
the following:
\begin{itemize}
\item $R(s)$ for some string variable $s$ and a regular language $R$;
\item $s=t$ for some string variables $s$ and $t$;
\item $\slen(s,x)$ or $\strnum(s,x)$ for some string variable $s$ and
  integer variable $x$; or
\item $\vec a \cdot \vec x \ge b$ for a vector of integer variables
  $\vec x$.
\end{itemize}
The size of a formula of $\TREln$ is defined in the standard way as
the number of symbols required to write it down, assuming binary
encoding of numbers, and where the size of some $R$ is the size of the
smallest DFA accepting $R$. Furthermore, in a quantifier-free formula
$\varphi$ of $\TREln$, we may without loss of generality assume that
all atomic formulas occur positive, except for atomic formulas $s=t$.

We now describe the reduction to existential Sem{\"e}nov
arithmetic. The idea underlying our proof is that we map a string $s$
to the number $\eval{1s}$. Note that we cannot directly treat strings
in $\Sigma^*$ as natural numbers due to the possibility of leading
zeros. This encoding enables us to treat strings as numbers and to
implement the functions $\strnum$ and $\slen$ in generalised
Sem{\"e}nov arithmetic. Given a quantifier-free formula $\varphi$ of
$\TREln$, we define by structural induction on $\varphi$ a function
$\sigma$ that maps $\varphi$ to an equi-satisfiable formula of
generalised Sem{\"e}nov arithmetic:
\begin{itemize}
\item Case $\varphi \equiv R(s)$: $\sigma(\varphi)\defeq (0^*1 R)(s)$;
\item Case $\varphi \equiv s=t$ or $\varphi \equiv \neg(s=t)$:
  $\sigma(\varphi) \defeq s=t$ or $\sigma(\varphi) \defeq \neg s=t$,
  respectively;
\item Case $\varphi \equiv \strnum(s,x)$: $\sigma(\varphi)\defeq
  \exists y.\, 2^y\leq s\land s< 2^{y+1} \land x = s-2^y$;
\item Case $\varphi \equiv \slen(s,x)$: $\sigma(\varphi)\defeq 2^{x}\le s\land s < 2^{x+1}$;
\item Case $\varphi \equiv \vec a\cdot \vec x \ge b$: $\sigma(\varphi) \defeq \vec a\cdot \vec x \ge b$; and
\item Case $\varphi \equiv \varphi_1 \sim \varphi_2$, ${}\sim{} \in \{
  \wedge, \vee \}$: $\sigma(\varphi)\defeq \sigma(\varphi_1) \sim
  \sigma(\varphi_2)$.
\end{itemize}

\begin{lemma}
  Let $\varphi$ be a quantifier-free formula of $\TREln$ and $S$ be
  the set of string variables occurring in $S$. Then $\varphi$ is
  satisfiable if and only if $\sigma(\varphi) \wedge \bigwedge_{s\in
    S} s>0$ is satisfiable.
\end{lemma}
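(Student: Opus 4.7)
The plan is to prove the lemma by structural induction on $\varphi$, built around the observation that the map $\phi\colon \Sigma^* \to \N_{>0}$ defined by $\phi(s) \defeq \eval{1s}$ is a bijection. Injectivity holds because the leading $1$ pins down the length of $s$ and removes the usual ambiguity arising from leading zeros, and surjectivity holds because every $n \geq 1$ has a unique msd binary expansion beginning with $1$. The conjunct $\bigwedge_{s\in S} s>0$ in the target formula is exactly what ensures that every numerical value assigned to a former string variable lies in the image of $\phi$, and hence can be pulled back.

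From there, the forward direction constructs an assignment $\eta'$ from a satisfying assignment $\eta$ of $\varphi$ by setting $\eta'(s) \defeq \phi(\eta(s))$ for each string variable and keeping integer variables unchanged; the backward direction applies $\phi^{-1}$ to the string-sort variables of any satisfying assignment of $\sigma(\varphi) \wedge \bigwedge_{s\in S} s>0$. Both directions reduce to verifying that each clause of the definition of $\sigma$ is semantically faithful. For $\varphi \equiv R(s)$, we have $\eta(s)\in R$ iff $1\,\eta(s) \in 1R \subseteq \vec 0^{*} 1R$, which is equivalent to $\eta'(s) \in \eval{\vec 0^{*} 1R}$ by definition of $\eval{\cdot}$. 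For $\varphi \equiv \slen(s,x)$, setting $k \defeq |\eta(s)|$ yields $\eval{1\,\eta(s)} \in [2^k, 2^{k+1})$, so $\eta(x)=k$ iff $2^{\eta(x)} \leq \eta'(s) < 2^{\eta(x)+1}$. For $\varphi \equiv \strnum(s,x)$, the same decomposition gives $\eval{\eta(s)} = \eta'(s) - 2^k$, and the existentially quantified $y$ in $\sigma(\varphi)$ witnesses exactly this $k$. Equality of string variables is handled immediately by bijectivity of $\phi$, while Presburger inequalities and Boolean connectives are preserved because $\sigma$ acts on them as the identity.

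The only genuinely subtle point is the empty string: $\phi(\varepsilon)=1$ forces $k=0$, so $\slen$ evaluates to $0$ and $\strnum$ evaluates to $1-2^0=0$, which matches the intended semantics of $\varepsilon$. This corner case is precisely why the encoding prefixes a $1$ rather than identifying strings with numbers directly---without the prefix, every string of the form $0^m$ would collapse onto $\varepsilon$ under $\eval{\cdot}$ and the bijection would fail. The main obstacle in writing out the full proof is simply to be careful and explicit about these atomic cases and about where the $s>0$ side conditions are used; once the bijection and its correctness on atoms are set up, the induction on Boolean structure is routine.
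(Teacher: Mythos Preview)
Your proposal is correct and follows essentially the same approach as the paper: both proofs define the encoding $s \mapsto \eval{1s}$, prove the equivalence by structural induction on $\varphi$, and dispatch the atomic cases $R(s)$, $\slen$, $\strnum$ via the same arithmetic identities. Your write-up is in fact slightly more explicit than the paper's in two places---you spell out that the map is a bijection onto $\N_{>0}$ (which is what makes the backward direction go through via $\phi^{-1}$), and you address the empty-string corner case---whereas the paper leaves these implicit.
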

\begin{proof}
  Observe that the variables occurring in $\varphi$ are the same
  variables as those occurring in $\sigma(\varphi)$. Let $S$ be the
  set of string variables in $\varphi$ and $X$ be the set of
  integer-valued variables in $\varphi$. Given an assignment $\mathcal
  I_S \colon S \to \{0,1\}^*$, we define $\tilde{\mathcal{I}}_S \defeq
  S \to \N$ such that $\tilde{\mathcal I}_S(s) \defeq \eval{1 \mathcal
    I_S(s)}$. Subsequently, denote by $\mathcal I_X\colon X \to \N$ an
  assignment to the integer-valued variables. We show by structural
  induction on $\varphi$ that $(\mathcal I_S,\mathcal I_x) \models
  \varphi$ if and only if $(\tilde{\mathcal I}_S, \mathcal I_X)
  \models \sigma(\varphi) \wedge \bigwedge_{s\in S} s>0$:
  \begin{itemize}
  \item Case $\varphi \equiv R(s)$: Let $\mathcal I_S(s)=b_{n-1}\cdots
    b_0$, we have $\mathcal I_S(s)\in R$ if and only if $2^n +
    \sum_{i=0}^{n-1} 2^ib_i \in \eval{0^*1R}$, noting that $2^n +
    \sum_{i=0}^{n-1}2^i b_i=\eval{1b_{n-1}\cdots b_0}=\tilde{\mathcal
      I}_S(s)$.
  \item Case $\varphi \equiv \strnum(s,x)$: Let $\mathcal
    I_S(s)=b_{n-1}\cdots b_0$ and $\mathcal I_X(x)=m$. We have that
    $m=\sum_{i=0}^{n-1}2^ib_i$ if and only if $m=\tilde{\mathcal
      I}_S(s)-2^n$ if and only if $(\tilde{\mathcal I}_S,\mathcal
    I_X)\models \sigma(\varphi) \wedge \bigwedge_{s\in S} s>0$.
  \item Case $\varphi\equiv \slen(s,x)$: Let $\mathcal
    I_S(s)=b_{n-1}\cdots b_0$ and $\mathcal I_X(x)=m$. We have that
    $m=n$ if and only if $2^m\le \eval{1b_{n-1}\cdots b_0} <2^{m+1}$
    if and only if $(\tilde{\mathcal I}_S,\mathcal I_X)\models
    \sigma(\varphi) \wedge \bigwedge_{s\in S} s>0$.
  \end{itemize}
  The remaining cases follow obviously.
\end{proof}

\section{Conclusion}

The main result of this article has been to show that the existential
theory of generalised Sem{\"e}nov arithmetic is decidable in
EXPSPACE. As an application of this result, we showed that a highly
relevant class of string constraints with length constraints is also
decidable in EXPSPACE; the decidability of this class was the main
problem left open in~\cite{BDGKMMN23}. On a technical level, those
results were obtained by showing that a restricted class of labelled
affine VASS has an EXPSPACE-decidable language emptiness problem. The
structural restrictions imposed on those restricted \laVASS are rather
strong, though necessary to obtain a decidable class of \laVASS.

An interesting aspect of our approach is that it establishes
automaticity of the existential fragment of a logical theory that is
different from traditional notions of automaticity, which are based on
finite-state automata or tree automata over finite or infinite words
and trees~\cite{KN95,BG00}, respectively. It would be interesting to
better understand whether there are natural logical theories whose
(existential) fragments are, say, Petri-net or visibly-pushdown
automatic.

We have ignored algorithmic lower bounds throughout this article, but
it would, of course, be interesting to see whether the upper bounds of
the decision problems we considered in this article are tight. It is
clear that generalised Sem{\"e}nov arithmetic is PSPACE-hard since it
can readily express the DFA intersection non-emptiness problem, but
this still leaves a considerable gap with respect to the EXPSPACE
upper bound we established. In particular, the recent results
of~\cite{BCM23} showing an NEXP upper bound for the existential
fragment of Sem{\"e}nov arithmetic suggest that, if an EXPSPACE lower
bound for existential generalised Sem{\"e}nov arithmetic is possible,
it will require the use of regular predicates.

\bibliography{bibliography}

\newpage

\onecolumn

\appendix
\section{Closure properties of \laVASS languages}


Let $V_i=\langle Q_i,d_i, \Sigma,\Delta_i,\lambda_i,q_0^{(i)},F_i,\phi_i \rangle$,
$i\in
\{1,2\}$, be two \laVASS.

\begin{proposition}
  The languages of \laVASS are closed under union and intersection.
\end{proposition}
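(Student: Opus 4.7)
The plan is to adapt the standard product and sum constructions from nondeterministic finite automata, treating the two coordinate blocks of counters independently so that the two \laVASS run in lockstep (for intersection) or in parallel-but-inert fashion (for union).

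For intersection I would build $V = \langle Q_1 \times Q_2, d_1+d_2, \Sigma, \Delta, \lambda, (q_0^{(1)}, q_0^{(2)}), F_1 \times F_2, \Phi_1 \wedge \Phi_2 \rangle$. For each pair of transitions $t_1=(p_1,A_1,p_1') \in \Delta_1$ and $t_2=(p_2,A_2,p_2') \in \Delta_2$ such that $A_1 \cap A_2 \neq \emptyset$, I would put a transition $t=((p_1,p_2), A_1 \cap A_2, (p_1',p_2'))$ in $\Delta$ with $\lambda(t)$ defined by juxtaposing $\lambda_1(t_1)$ on the first $d_1$ counters with $\lambda_2(t_2)$ on the last $d_2$ counters. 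Since the two blocks of counters never interact and the two copies agree on which letter was read, an induction on the length of runs shows that $c_0 \xrightarrow{w} c_f$ in $V$ iff the projections of $c_f$ are reached by $w$ in $V_1$ and $V_2$ respectively; composing with the conjoined acceptance formula $\Phi_1 \wedge \Phi_2$ yields $L(V)=L(V_1) \cap L(V_2)$. The size bound $|V| \le |V_1| \cdot |V_2|$ follows directly by counting: the number of states multiplies, the number of transitions is at most the product, and the norm of the update function and the formula size add up in the logarithmic contribution.

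For union I would introduce a fresh initial state $q_0$ and take as state set $\{q_0\} \cup Q_1 \cup Q_2$ (disjoint), with dimension $d_1+d_2$. Every transition $(q_0^{(i)}, A, q')$ of $V_i$ is duplicated as a transition out of $q_0$, using update $\lambda_i$ on the corresponding counter block and the identity $\wID$ on the other block; transitions internal to $V_1$ and $V_2$ are kept unchanged, again padded with $\wID$ on the inactive block. Because the initial configuration has all counters equal to $0$ and $\wID$ preserves $0$, a run entering $V_i$ leaves the other block permanently at $\vec 0$. The final-state set is $F_1 \cup F_2$ and the acceptance formula is
\[
\Phi \;\equiv\; \bigl(\Phi_1 \wedge \textstyle\bigwedge_{j \le d_2} y_j = 0\bigr) \,\vee\, \bigl(\Phi_2 \wedge \textstyle\bigwedge_{i \le d_1} x_i = 0\bigr),
\]
where $x_i$ and $y_j$ range over the first and second blocks. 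Correctness is again a routine induction on run length.

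The only mildly subtle point is that \laVASS have a single designated initial state and the acceptance condition is a single formula $\Phi$ that cannot directly branch on which component state was reached; this is what forces the disjunctive $\Phi$ and the zero-padding trick for union, and is the main thing to check. Everything else is bookkeeping: monotonicity of the norm and state count under these constructions immediately yields the claimed size bound in the intersection case, and an analogous linear bound in the union case.
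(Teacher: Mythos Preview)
Your intersection construction is correct and essentially identical to the paper's: Cartesian product of control states, juxtaposed counter blocks, and the conjunction $\Phi_1 \wedge \Phi_2$ as acceptance formula, with correctness by induction on run length.

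Your union construction, however, has a genuine bug. The acceptance formula $\Phi$ sees only counter values, not the control state, so nothing prevents a run that enters $V_1$ and ends in some $q_f \in F_1$ from being accepted via the \emph{second} disjunct $\Phi_2 \wedge \bigwedge_i x_i = 0$. Concretely: let $V_1$ have a single transition $p \xrightarrow{a} q_1$ with identity update, $F_1=\{q_1\}$, and $\Phi_1 \equiv (x_1 = 7)$, so $L(V_1) = \emptyset$; let $V_2$ have a single transition $p' \xrightarrow{b} q_2$ with identity update, $F_2=\{q_2\}$, and $\Phi_2 \equiv (y_1 = 0)$, so $L(V_2) = \{b\}$. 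In your product, the run $(q_0,0,0) \xrightarrow{a} (q_1,0,0)$ ends in $F_1\cup F_2$, and $(0,0)$ satisfies your second disjunct since $\Phi_2(0)$ holds and $x_1 = 0$; hence $a \in L(V)$ although $a \notin L(V_1) \cup L(V_2)$. A second, smaller gap: if $q_0^{(i)} \in F_i$ and $\Phi_i(\vec 0)$ holds then $\epsilon \in L(V_i)$, but your fresh $q_0$ is never final, so $\epsilon \notin L(V)$.

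Both issues are easily repaired: add one extra ``flag'' counter $z$ that stays $0$ on the branch into $V_1$ and is set to $1$ (via $x\mapsto x+1$) on the first transition into $V_2$, then conjoin $z=0$ resp.\ $z=1$ to the two disjuncts of $\Phi$; for the $\epsilon$ case, put $q_0$ into $F$ whenever some $q_0^{(i)}\in F_i$. The paper itself says only ``closure under union is trivial since we allow for non-determinism'' and gives no explicit construction, so your level of detail already exceeds it---but the detail has to be right.
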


Closure under union is trivial since we allow for non-determinism. To
show closure under intersection, we define the \laVASS $V \defeq
(Q',d_1+d_2,\Sigma,\Delta',\lambda',q_0',F',\phi')$ such that
\begin{itemize}
\item $Q' \defeq Q_1\times Q_2$,
\item $((q_1,q_2),a,(r_1,r_2)) \in \Delta'$ if and only if $(q_1,a,r_1)\in \Delta_1$ and $(q_2,a,r_2)\in \Delta_2$,
\item $\lambda'((q_1,q_2),a,(r_1,r_2)) \defeq (\lambda_1(q_1,a,r_1), \lambda_2(q_2,a,r_2))$,
\item $q_0'' \defeq (q_0^{(1)},q_0^{(2)})$,
\item $F' \defeq F_1\times F_2$, and
\item $\phi$ is the conjunction of $\phi_1$ and $\phi_2$, with counters
renamed accordingly.
\end{itemize}

\begin{lemma}\label{lem:intersect-reach}
  For any $w\in \Sigma^*$, $q^{1},r^{(1)}\in Q_1$ and $q^{2},r^{2}\in
  Q_2$, the following are equivalent:
  \begin{enumerate}[(i)]
  \item
    $((q^{(1)},q^{(2)}),m_1,\dots,m_{d_1+d_2})\xrightarrow{w}_{V}
    ((r^{(1)}, r^{(2)}),m_1',\dots,m_{d_1+d_2}')$
  \item
  $(q^{(1)},m_1,\dots,m_{d_1})\xrightarrow{w}_{V_1}
    (r^{(1)},m_1',\dots,m_{d_1}')$ and
    $(q^{(2)},m_{d_1+1},m_{d_1+1}\dots,m_{d_1+d_2})\xrightarrow{w}_{V_2}
    (r^{(1)},m_{d_1+1}',\dots,m_{d_1+d_2}')$.
  \end{enumerate}
\end{lemma}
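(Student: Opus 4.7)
The plan is a straightforward induction on the length of $w$, exploiting the product structure of $V$. The core observation is that since the counters of $V_1$ and $V_2$ are placed on disjoint coordinate blocks in $V$, the effect of $\lambda'$ on a $V$-configuration is precisely the juxtaposition of the effects of $\lambda_1$ and $\lambda_2$ on the two blocks, and non-negativity of the combined tuple is equivalent to non-negativity of each block separately. Thus a one-step $V$-transition on symbol $a$ factors canonically into a pair of one-step transitions in $V_1$ and $V_2$ on the same symbol, and conversely any such pair recombines into a valid one-step $V$-transition.

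For the base case $w=\varepsilon$, both (i) and (ii) unfold to the assertions $q^{(j)}=r^{(j)}$ for $j\in\{1,2\}$ together with $m_k=m_k'$ for every $k$, which is immediate from the definition of the reflexive closure. For the inductive step, decompose $w=w'a$ with $a\in\Sigma$. In direction $(\text{i})\Rightarrow(\text{ii})$, take a witnessing intermediate configuration $((s^{(1)},s^{(2)}),n_1,\dots,n_{d_1+d_2})$ reached from the start after consuming $w'$; the last step uses some $t=((s^{(1)},s^{(2)}),A,(r^{(1)},r^{(2)}))\in\Delta'$ with $a\in A$. By the definition of $\Delta'$, there are corresponding $t_1=(s^{(1)},A_1,r^{(1)})\in\Delta_1$ and $t_2=(s^{(2)},A_2,r^{(2)})\in\Delta_2$ both enabled by $a$, and $\lambda'(t)$ decomposes as $(\lambda_1(t_1),\lambda_2(t_2))$ on the two blocks. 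This yields the desired $a$-transitions in $V_1$ and $V_2$ from the projected intermediate configurations to the projected target configurations, and applying the induction hypothesis to the prefix $w'$ completes the runs in each component. Direction $(\text{ii})\Rightarrow(\text{i})$ is dual: take intermediate configurations after $w'$ in the two component runs, combine them into a single $V$-configuration, use the induction hypothesis to get a $V$-run on $w'$ reaching it, and then combine the two final $a$-transitions into the required transition in $\Delta'$, observing that $\lambda'$ applied to the combined transition yields exactly $(m_1',\dots,m_{d_1+d_2}')$.

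There is no genuine obstacle in this argument; it is essentially a verification that the synchronous product of labelled affine VASS behaves as expected by construction. The only book-keeping worth mentioning is that transitions in $\Delta$ carry a label set $A\subseteq\Sigma$ rather than a single letter, so one should phrase the construction of $\Delta'$ at the level of transition-pairs $(t_1,t_2)\in\Delta_1\times\Delta_2$, associating to each such pair a product transition with label set $A_1\cap A_2$ (and discarding it when this intersection is empty). With this reading, the symbol-level transition relation $\xrightarrow{a}$ in $V$ factors cleanly through the corresponding relations in $V_1$ and $V_2$, making both directions of the inductive step transparent.
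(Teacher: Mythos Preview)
Your proposal is correct and follows essentially the same approach as the paper: induction on $\abs{w}$, with the base case $w=\varepsilon$ immediate and the inductive step decomposing $w=w'a$ and combining the induction hypothesis with the definition of the product transition. Your additional remark about reading $\Delta'$ at the level of transition-pairs with label set $A_1\cap A_2$ is a useful clarification of the construction but does not change the argument.
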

\begin{proof}
  Let $w\in \Sigma^*$, we prove the statement by induction on $\abs
  w$. The base case $w=\epsilon$ is immediate by the definition of
  $V$.

  For the induction step, let $w=u\cdot a$ for some
  $a\in\Sigma$. The induction hypothesis yields
  \begin{align*}
    & ((q^{(1)},q^{(2)}),m_1,\dots,m_{d_1+d+2})\xrightarrow{u}_V
    ((s^{(1)}, s^{2}),m_1'',\dots,m_{d_1+d_2}'')\\
    \iff & (q^{(1)},m_1,\dots,m_{d_1})\xrightarrow{u}_{V_1}
    (s^{(1)},m_1'',\dots,m_{d_1}'') \text{ and }\\
    & (q^{(2)},m_{d_1+1},\dots,m_{d_1+ d_2})\xrightarrow{u}_{V_2}
    (s^{(2)},m_{d_1+1}'',\dots,m_{d_1+d_2}'')\,.
  \end{align*}
  Again, by definition of $V$ we furthermore have
  \begin{align*}
    & ((s^{(1)},s^{(2)}),m_1'',\dots,m_{d_1+d+2}'')\xrightarrow{a}_V
    ((r^{(1)}, r^{2}),m_1',\dots,m_{d_1+d_2}'')\\
    \iff & (s^{(1)},m_1,\dots,m_{d_1})\xrightarrow{a}_{V_1}
    (r^{(1)},m_1'',\dots,m_{d_1}'') \text{ and }\\
    & (s^{(2)},m_{d_1+1},\dots,m_{d_1+ d_2})\xrightarrow{a}_{V_2}
    (r^{(2)},m_{d_1+1}'',\dots,m_{d_1+d_2}'')\,.
  \end{align*}
  This concludes the proof of the statement.
\end{proof}

\begin{corollary}
  Let $V_1,V_2$ be \laVASS. Then $L(V_1)\cap L(V_2))=L(V_1\cap V_2)$.
\end{corollary}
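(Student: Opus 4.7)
The plan is to derive the corollary as an essentially immediate consequence of \Cref{lem:intersect-reach}, by specialising the biconditional there to initial and final configurations and then aligning the acceptance conditions of the product construction with those of the two component \laVASS.

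First, I would fix $w\in\Sigma^*$ and unpack what $w\in L(V_1\cap V_2)$ means by definition of the language of an \laVASS applied to the product $V\defeq V_1\cap V_2$ constructed just above. This is precisely the existence of a run
\[
((q_0^{(1)},q_0^{(2)}),0,\ldots,0)\xrightarrow{w}_V ((r^{(1)},r^{(2)}),m_1',\ldots,m_{d_1+d_2}')
\]
such that $(r^{(1)},r^{(2)})\in F'=F_1\times F_2$ and $(m_1',\ldots,m_{d_1+d_2}')\models \phi'$, where by construction $\phi'$ is the conjunction of $\phi_1$ (over the first $d_1$ counter variables) and $\phi_2$ (over the last $d_2$ counter variables, with counters renamed appropriately).

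Next, I would apply \Cref{lem:intersect-reach} to $q^{(i)}=q_0^{(i)}$ and $m_1=\cdots=m_{d_1+d_2}=0$ to split the above single run into two parallel runs, one in each $V_i$: explicitly, $(q_0^{(1)},0,\ldots,0)\xrightarrow{w}_{V_1}(r^{(1)},m_1',\ldots,m_{d_1}')$ and $(q_0^{(2)},0,\ldots,0)\xrightarrow{w}_{V_2}(r^{(2)},m_{d_1+1}',\ldots,m_{d_1+d_2}')$. Because the final-state set of $V$ is the Cartesian product $F_1\times F_2$ and because $\phi'$ is the conjunction of $\phi_1$ and $\phi_2$ on disjoint sets of counter variables, $(r^{(1)},r^{(2)})\in F'$ together with $(m_1',\ldots,m_{d_1+d_2}')\models \phi'$ splits exactly into $r^{(1)}\in F_1$, $r^{(2)}\in F_2$, $(m_1',\ldots,m_{d_1}')\models\phi_1$ and $(m_{d_1+1}',\ldots,m_{d_1+d_2}')\models\phi_2$. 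This is precisely the conjunction $w\in L(V_1)\wedge w\in L(V_2)$.

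Both directions go through in exactly the same way since \Cref{lem:intersect-reach} is a biconditional and the decomposition of acceptance conditions preserved by the product construction is an equivalence. The only real obstacle is to be careful about the renaming of counter indices in the construction of $\phi'$ so that the quantifier-free Presburger formulas $\phi_1$ and $\phi_2$ talk about disjoint slices of the $(d_1+d_2)$-tuple of counters, and I would make this explicit at the outset of the proof to avoid any ambiguity. After that, the whole argument is a one-line chain of equivalences.
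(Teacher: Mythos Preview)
Your proposal is correct and follows essentially the same approach as the paper: specialise \Cref{lem:intersect-reach} to the initial configuration with all-zero counters, then use that $F'=F_1\times F_2$ and that $\phi'$ is the conjunction of $\phi_1$ and $\phi_2$ on disjoint counter slices to split the acceptance condition, yielding a chain of equivalences. The paper's proof is exactly this one-line chain of $\iff$'s; your explicit remark about the counter-renaming in $\phi'$ is a welcome clarification but not an additional idea.
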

\begin{proof}
  For any $w\in\Sigma^*$, by \Cref{lem:intersect-reach}, we have:
  \begin{align*}
    & w\in L(V_1\cap V_2)\\
    \iff & (q_0',0\dots,0)\xrightarrow{w}_V((r_1,r_2),m_1,\dots,m_{d_1+d_2}) \text{
      for some } (r_1,r_2)\in F', (m_1,\dots,m_{d_1+d_2})\in S_f'\\
    \iff & (q_0^{(1)},0,\dots,0)\xrightarrow{w}_{V_1}
    (r_1,m_1,\dots,m_{d_1}) \text{ for some } r_1\in F_1, (m_1,\dots,m_{d_1})\in
    S_f^{(1)}, \text{ and }\\
    & (q_0^{(2)},0,\dots,0)\xrightarrow{w}_{V_2} (r_2, m_{d_1+1},\dots,m_{d_1+d_2}) \text{
      for some } r_2\in F_2\, (m_{d_1+1},\dots, m_{d_1+d_2})\in
    S_f^{(2)}\\
    \iff & w \in L(V_1) \text{ and } w\in L(V_2)\,.\qedhere
  \end{align*}
\end{proof}

From the construction, it is clear that for the size of the \laVASS
for $L(V_1)\cap L(V_2)$, we have:
\begin{itemize}
\item $\abs Q = \abs{Q_1} \cdot \abs{Q_2}$,
\item $\abs \Delta \le \abs{\Delta_1}\cdot \abs{\Delta_2}$, 
\item $\abs \lambda = max(\abs{\lambda_1}, \abs{\lambda_2})$, and
\item $d=d_1+d_2$.
\end{itemize}

\end{document}